\newtheorem{lemma}{Lemma}
\newtheorem{example}{Example}
\newtheorem{proposition}{Proposition}
\newtheorem{assumption}{Assumption}
\theoremstyle{definition}
\theoremstyle{plain}
\newcommand{\mbb}{\mathbb}
\newcommand{\E}{\mbb{E}}
\theoremstyle{plain}
\theoremstyle{definition}
\theoremstyle{remark}
\newcommand{\distas}[1]{\mathbin{\overset{#1}{\kern\z@\sim}}}%
\newsavebox{\mybox}\newsavebox{\mysim}
\newcommand{\distras}[1]{%
  \savebox{\mybox}{\hbox{\kern3pt$\scriptstyle#1$\kern3pt}}%
  \savebox{\mysim}{\hbox{$\sim$}}%
  \mathbin{\overset{#1}{\kern\z@\resizebox{\wd\mybox}{\ht\mysim}{$\sim$}}}%
}
\begin{document}
\title[Ambiguity in First Price Auction]{Empirical Relevance of Ambiguity in First Price Auction Models$^{*}$}
\author[G. Aryal and D. Kim]{Gaurab Aryal$^{\dagger}$ and Dong-Hyuk Kim$^{\ddagger}$ }

\date{December 31, 2014}
\thanks{\noindent$^{*}$ We thank Ali Horta\c{c}su, Brent Hickman, St\'ephane Bonhomme and seminar audience at ANU, Chicago, Duke, Melbourne, UTS, UNSW and the $31^{st}$ Australasia Economic Theory Workshop, and the $13^{th}$ Annual International IO Conference. The usual disclaimer applies. 
\\ 
$\dagger$ University of Chicago. e-mail: \href{aryalg@uchicago.edu}{aryalg@uchicago.edu} \\
${\ddagger}$Vanderbilt University. e-mail: \href{dong-hyuk.kim@vanderbilt.edu}{dong-hyuk.kim@vanderbilt.edu}
}

\begin{abstract}
We study the identification and estimation of first-price auction models where bidders have ambiguity about the valuation distribution and their preferences are represented by maxmin expected utility. When entry is exogenous, the distribution and ambiguity structure are nonparametrically identified, separately from risk aversion (CRRA). We propose a flexible Bayesian method based on Bernstein polynomials. Monte Carlo experiments show that our method estimates parameters precisely, and chooses reserve prices with (nearly) optimal revenues, whether there is ambiguity or not. Furthermore, if the model is misspecified -- incorrectly assuming no ambiguity among bidders -- it may induce estimation bias with a substantial revenue loss.
\\\\
\noindent{\bf Keywords:} first-price auction, identification, ambiguity aversion, maxmin expected utility, Bayesian estimation   \\
{\bf JEL classification:} C11, C44, D44
\end{abstract}
\maketitle
\section{Introduction\label{section: introduction}}
We
study 
the  
identification and estimation
of first-price auction models 
with independent private values 
where symmetric risk averse bidders
do not know the valuation distribution, i.e., the distribution is \emph{ambiguous}.
In particular, we depart from the current literature on empirical auctions 
by relaxing the assumption 
that there is a unique valuation distribution that is commonly known among the bidders. 
Instead,
we consider an environment 
where 
the bidders 
regard
many distributions 
as 
equally
reasonable.
The main contribution of the paper is three-fold. 
First, we introduce the maximin expected utility model with multiple 
distributions, 
\citep*{Gilboa_Schmeidler_1989}, 
to capture the presence of ambiguity in empirical auctions.\footnote{ An ambiguity averse decision maker prefers a lottery with a known distribution to the one with an unknown distribution.}
Second, we provide sufficient conditions 
to nonparametrically identify 
the valuation distribution 
and 
the bidders'
attitude toward ambiguity 
separately from their risk (CRRA) preference.
Third, we develop a Bayesian method that employs Bernstein polynomials to estimate the model parameters 
and propose policy recommendations. 

Almost all papers in empirical auction use 
the  
expected 
utility (EU) framework. 
See \cite{Donald_Paarsch_1993, Guerre_Perrigne_Vuong_2000, Athey_Haile_2007, Hendricks_Porter_2007, Guerre_Perrigne_Vuong_2009} among others.  
Under this framework, bidders know the valuation distribution, 
while the econometrician does not.
Recently, research in decision theory and  experimental economics,  
\cite{Gilboa_2009, Camerer_Karjalainen_1994, Fox_Tversky_1995, Halevy_2007}
have convincingly illustrated
that in many situations 
economic agents  
might not be ``probabilistically sophisticated'' and unable to pin-point the exact distribution. 
In such environments, it is conceivable that both 
the 
bidders and 
econometrician are uncertain about the distribution.\footnote{ \cite{Hansen_2014} refers to them as economic models with outside uncertainty and inside uncertainty, respectively and articulates the need and benefits of allowing both such uncertainties.} 
How can such uncertainties be introduced in empirical auction? Are such models identified? Can we use bids data to determine whether bidders are uncertain about the true distribution?   
We provide answers to these questions.

To model bidders' uncertainty about the distribution we consider an environment with multiple distributions: where bidders have a set of infinitely many, equally reasonable, distributions.
This leads to decisions under ambiguity.
Ambiguity in probability judgements has been studied  
since \cite{Keynes_1921, Knight_1921}, 
culminating to a position of eminence with \cite{Ellsberg_1961}. 
More recently decision under ambiguity has become an influential subfield of economics; see \cite{Gilboa_Schmeidler_1989, Epstein_1999,Hansen_Sargent_2001} and \cite{Gilboa_2009} for a comprehensive treatment. 
It is crucial for the seller 
to determine the presence of ambiguity from auction data and 
draw an optimal policy under ambiguity for the following reasons: First, ambiguity nests EU as a special case and hence leads to more robust analysis of the data. 
Second, if bidders are ambiguity averse the revenue equivalence fails, \cite{Lo_1998}. 
Third, first price auction is suboptimal and 
the optimal reserve price should decrease with ambiguity, \cite{Bose_Ozdenoren_Pape_2006, Bose_Renou_2014, Bodoh-Creed_2012}.
Thus this paper contributes to the empirical auction literature by providing a tractable framework to introduce and estimate a model with ambiguity aversion.

We follow \citep*{Bose_Ozdenoren_Pape_2006} and assume that bidders have 
the 
maxmin expected utility (henceforth, MEU)     
which also provides a natural generalization of EU. 
\cite{Gilboa_Schmeidler_1989} laid a formal foundation for MEU and showed that under some axioms 
there is a set $\Gamma$ of equally reasonable distributions and 
each bidder maximizes the expected utility, where the expectation is taken with respect to the most pessimistic distribution in $\Gamma$. 
The theory, however, is silent about $\Gamma$, so it has to be specified by the econometrician.  
A strong parametrization of $\Gamma$, however, may cause a misspecification bias 
or can even nullify any effect of ambiguity; see Example \ref{ex:epsilon}. 
So we only assume that  $\Gamma$ is a convex subset of all 
absolutely continuous distributions 
over a compact support, each with nowhere vanishing density. 
This specification is sufficiently flexible enough to minimize misspecification bias and 
at the same time 
allows  
us to consider the kind of ambiguity that has
meaningful empirical content.
The set $\Gamma$ is assumed to include the true distribution.
In each auction bidders independently and privately draw their valuations (IPV)
from this common, but unknown, (true) distribution.
Thus we 
consider 
static first price auctions
with symmetric players. 
Since ambiguity in empirical auction is 
a new topic, 
focusing on static auction will allow for meaningful analysis of ambiguity as it keeps the model tractable 
by allowing us to abstract away from modeling forward looking and learning behavior with multiple distributions in a dynamic game; 
see \cite{Gilboa_Schmeidler_1993, Epstein_Schneider_2003, Siniscalchi_2011}.

A maxmin bidder uses 
the most pessimistic 
distribution to determine her bid. 
To model this pessimism we innovate a mapping, and call it the D-function, 
that assigns 
each
quantile of the true distribution to a quantile of the most pessimistic distribution
so that whenever there is ambiguity 
the D-function is (strictly) below the identity in the interior of the unit interval.
The model primitives to identify are then 
the valuation distribution, the D-function, and the utility function. 
We assume that bid data are 
generated from the symmetric Bayesian Nash equilibrium (BNE)
of the game with incomplete information in which
every bidder computes her winning probability using the most pessimistic distribution. 
The BNE is
characterized by a unique, strictly increasing, bidding strategy 
\citep*{Maskin_Riley_1984, Athey_2001}, 
which is useful for identification. 
\citep*{Guerre_Perrigne_Vuong_2009} showed that
the model even without ambiguity is unidentified,  
and they identified the model additionally assuming 
that bidders' participation is exogenous.
Even under this restriction, however, we find that 
the MEU model is observationally equivalent to the EU model. 
So, we need more structure to identify the model primitives from bid data. 
To that end, we assume that the utility exhibits constant relative risk aversion (CRRA).  Under these assumptions, 
we establish the identification of the model primitives.  
Specifically, 
the slope of the bidding strategy 
at the lowest value depends on the utility function only, 
which isolates the CRRA coefficient. 
Then, 
the difference in bid quantiles across auctions with different numbers of bidders identifies the D-function. 
Finally, 
the strict monotonicity of the bidding strategy, which is a functional of  
the D-function and the utility function,
uncovers the valuation distribution from the bid distributions.
%
We acknowledge that \citep*{Grundl_Zhu_2013} simultaneously and independently obtained similar identification results, but our paper differs substantially in terms of estimation and analysis. 

We
propose a Bayesian method
to estimate the model primitives and choose a revenue maximizing reserve price. 
We \emph{directly} specify the valuation density and the D-function using a mixture of Bernstein polynomials. Bernstein polynomials form a dense subset in the space of functions with a bounded support. 
The direct approach provides a natural environment 
for the Bayesian decision rule 
to choose a reserve price, \citep*{Aryal_Kim_2013, KimDH_2013_IJIO},  
and 
it allows us to impose shape restrictions implied by the theoretical model, such as monotonicity of the bidding function and D-function below the identity function, with ease.
As  
a result our empirical 
method is always in sync with the theoretical model, which not only  
improves efficiency 
but also leads to
valid policy recommendations; 
see \citep*{KimDH_2014_BSL}.
Another advantage of the Bayesian method arises when we assume ambiguity, and restrict the D-function to be below the identity function, but bidders know the true distribution, so the D-function is an identity function. 
This might then lead to 
a   
bias 
because 
the true D-function is the boundary of the space of all D-functions 
while we restrict 
it below the boundary.  
We can reduce this bias by putting a positive prior mass on the boundary.
This prior mass then enables 
the data (likelihood) to increase the probability on the true D on the boundary,
and as a result, 
improves the accuracy of posterior prediction. 
Such a bias reduction procedure would be difficult, if not impossible, in a frequentist framework.
Moreover, 
the support of the bid data 
depends on the model primitives, in which case, unlike the MLE, 
the Bayesian method continues to be efficient; 
see \citep*{Hirano_Porter_2003}.

We document the performance of our method in a Monte Carlo study.
We consider three different environments, each with a number of alternative data configurations by varying sample sizes and the numbers of bidders. 
In the first (second) environment, bid data are generated from the model with (without) ambiguity. 
In both cases, 
our method precisely estimates the model primitives and 
chooses reserve prices 
that produce nearly the largest revenues
for all the configurations under consideration.  
It is noticeable that 
even when there is no ambiguity 
our method performs well. 
This is not only because MEU nests EU 
but also because in our method 
we can put a prior mass on the boundary
of the space of D-functions.
Lastly to understand the effect of ignoring ambiguity  
we consider an environment 
where bid data are generated from the model with ambiguity 
but the econometrician ignores ambiguity. 
We find that then 
the estimates are inaccurate -- 
the mean integrated squared errors of the estimated valuation densities 
are roughly four to twenty times larger than the case of the first environment.
This misspecification leads to about three percent 
lower 
revenues 
than the first environment above. 
To summarize: 
our method performs well whether there is ambiguity or not, 
but if we incorrectly ignore ambiguity, 
the estimates can be severely biased and policy recommendations may be unreliable. 
 
In the remaining of the paper we proceed as follows. We start with the model and identification in section \ref{section:model}, estimation methodology in section \ref{section:estimation}, the Monte Carlo study in section \ref{section:montecarlo}, and section \ref{section:conclusion} concludes.
 
\section{Model And Identification}\label{section:model}
An indivisible 
object is to be allocated to one of $n\geq2$ bidders in a first-price auction without 
a positive    
reserve price. 
Each bidder 
$i\in\left\{1,\ldots,n\right\}$
observes only her own value $v_i$ and bids $b_i$.
The highest bidder wins the object and gets utility $u(v_{i}-b_{i})$ while the rest get 
$u(0)$. 
A bidder $i$ with value $v_{i}$ solves:
\begin{eqnarray}
\max_{b_i}\left\{u(v_i-b_i)\times  Pr(win)\right\}\equiv \max_{b_i}\left\{u(v_i-b_i)\times Pr(b_i\geq b_{j}, j\neq i)\right\}.\label{eq:obj0}
\end{eqnarray}
The values $v_1,\ldots,v_n$ are all independent and identically distributed (i.i.d) from 
$F_{0}(\cdot|n,W)$, defined over $[\underline{v}(n,W), \overline{v}(n,W)]$,
where, $W\in{\mathcal W}\subset\mathbb{R}^{L}$ is a vector of auction covariates that is observed by both the bidders and the econometrician. For notational ease, we shall suppress the dependence on $W$. 
Bidders, however, \emph{do not} know  
$F_{0}(\cdot|n)$, 
and 
they cannot compute the ``winning probability,'' 
which is essential to solve (\ref{eq:obj0})
under the EU framework.      
To model the bidders'  
bidding behavior, therefore,  
we follow the literature on decision under ambiguity and assume: 
\begin{assumption}\label{ass:1} Bidders are ambiguity averse and their preferences have maxmin expected utility representation. 
\end{assumption}
\noindent  
If $\Omega$ denotes the set of 
all possible
states of nature, $\tilde{u}(\cdot)$ the utility function, and ${\mathcal A}$ the set of all feasible actions, \cite{Gilboa_Schmeidler_1989} provides necessary and sufficient behavioral conditions such that there is a unique convex set $\Gamma$ of equally reasonable distributions 
over $\Omega$ such that a decision maker prefers an action $a$ to $b$ with $a,b\in\mathcal{A}$ whenever 
\begin{eqnarray*}
\min_{F\in\Gamma}\left\{\E_{F}\tilde{u}(a(\omega))\right\}\geq \min_{F\in\Gamma}\left\{\E_{F}\tilde{u}(b(\omega))\right\},
\end{eqnarray*}
where $\E_{F}$ is the expectation with respect to the probability measure $F$ and $\omega\in\Omega$.
Furthermore, for empirical implementation, it is desirable that the set $\Gamma$ 
contains   
countably additive distributions. 
To that end, we follow \cite{Chateaunueuf_Maccheroni_Marinacci_Tallon_2005} and assume that the preference ordering is monotone continuous.

We begin by proposing a way to adapt the set of distributions to represent the strategic effects of ambiguity.   
Let ${\mathcal P}_{n}$ be a 
set of all distribution functions defined over $[\underline{v}(n), \overline{v}(n)]$ for a given $n\in {N}:=\left\{m\in\mathds{N}:2\leq m<\infty\right\}$, such that 
$F_{0}(\cdot|n)\in{\mathcal P}_{n}$.
In addition, we make the following assumption: 

\begin{assumption} {It is common knowledge among the bidders that:} 
\begin{enumerate}
\item There are $n\in{N}$ bidders with an identical utility function $u:\mathds{R}_+\rightarrow\mathds{R}_+$ with $u'>0$, $u''\leq0$, and $u(0)=0$.
\item Their values $v_1,\ldots,v_n$ are independently and identically distributed. 
\item The true valuation distribution $F_0(\cdot|n)\in\mathcal{P}_n$ 
is  unknown to the bidders, but any information about $F_0(\cdot|n)$ other than realized values is shared among the bidders. 
\end{enumerate}
\label{ass:2}
\end{assumption}
\noindent The first two parts of the assumptions are self explanatory. 
The last part implies that bidders have access to a common training data that is used to form their beliefs. 
For instance,  
in the timber auction   
every bidder ``cruises'' the same tract before bidding.

Following the tradition of \cite{Harsanyi_1967},
we interpret the auction as a game of 
incomplete 
information among the bidders 
with an identical information structure. 
From assumptions \ref{ass:1} and \ref{ass:2}, this implies 
that every bidder uses the most pessimistic distribution  
in the set $\Gamma$ of equally reasonable distributions
to determine her expected utility and chooses a bid accordingly. 
Since \cite{Gilboa_Schmeidler_1989} is silent about what the set $\Gamma$ should be, in practice an econometrician has to choose the set. 
The choice will affect the estimation and inference. 
To illustrate the importance of choosing $\Gamma$ we consider a widely used model (in statistics and economics), called the $\varepsilon$-contamination model,\footnote{ See \cite{Huber_1973, Berger_1985, Berger_Berliner_1986, Nishimura_Ozaki_2004, Cerreia-Vioglio_Maccheroni_Marinacci_Montrucchio_2013} for usage of $\varepsilon$-contamination model.} where $\Gamma$ is the set of all distributions that can be written as a $(1-\varepsilon)$ and $(\varepsilon)$ combination of the true distribution $F_{0}(\cdot|n)$ and some other 
distribution $R(\cdot|n)$, and show that such a parametrization neutralizes
any strategic effects arising due to ambiguity.  
\begin{example}\label{ex:epsilon} ({\bf $\varepsilon-$ contamination})
Let $\varepsilon\in (0,1)$ be commonly known to $n$ bidders.
Under the $\varepsilon$-contaminated model, the set of distributions is defined as
$$
 \tilde{\Gamma}_{n}(\varepsilon):= \{F(\cdot|n): F(\cdot|n) = (1-\varepsilon)F_{0}(\cdot|n)+ \varepsilon R(\cdot|n) \textrm{ with } R(\cdot|n)\in\mathcal{P}_n\}, 
$$ 
which is {\bf unknown} to the econometrician. 
Even though the bidders know $\varepsilon$ they {\bf do not} know $F_{0}(\cdot|n)$. 
Let $\beta_{n}:[\underline{v}(n),\overline{v}(n)]\rightarrow\mathbb{R}$ be a strictly increasing bidding function. 
Under Assumptions \ref{ass:1} and \ref{ass:2}, then, the objective function (\ref{eq:obj0}) can be written as 
$$
\max_{x\in[\underline{v}(n), \overline{v}(n)]}\min_{F\in  \tilde{\Gamma}_{n}(\varepsilon)} 
u(v-\beta_{n}(x))F(x|n)^{n-1} 
=
\max_{x\in[\underline{v}(n), \overline{v}(n)]}
u(v-\beta_{n}(x))F^*(x|n)^{n-1}
$$
where 
$
F^{*}(v|n) 
=(1-\varepsilon)F_{0}(v|n)\cdot\mathds{1}[v<\overline{v}(n)] + \mathds{1}[v=\overline{v}(n)],
$ 
with $\mathds{1}(A)$ an indicator for the event $A$.
We reserve the notation $F^*(\cdot|\cdot)$ to denote the most pessimistic distribution.
The solution to the MEU model with $\tilde{\Gamma}_{n}(\varepsilon)$ also solves the EU model, since 
$$
\arg\!\!\!\!\max_{x\in[\underline{v}(n), \overline{v}(n)]}u(v-\beta_{n}(x))\left[(1-\varepsilon)F_{0}(x|n)\right]^{n-1}
=\arg\!\!\!\!\max_{x\in[\underline{v}(n), \overline{v}(n)]}u(v-\beta_{n}(x))F_{0}(x|n)^{n-1}.
$$
Intuitively, this transpires 
because the ambiguity, as measured by $\varepsilon$, scales the true distribution for all the bidders by a factor of $(1-\varepsilon)$, and hence does not affect the relative  probability of winning.
In this model, the first order condition (FOC) is 
\begin{eqnarray}
\frac{u'[v-\beta_{n}(v)]\beta_n'(v)}{u[v-\beta_n(v)](n-1)}=\frac{f^{*}(v|n)}{F^{*}(v|n)} = \frac{(1-\varepsilon)f_{0}(v|n)}{(1-\varepsilon)F_{0}(v|n)}=\frac{f_{0}(v|n)}{F_{0}(v|n)},\label{eq:foc_example}
\end{eqnarray} 
suggesting that even if there is ambiguity about $F_{0}(\cdot|n)$, as long as her inverse hazard rate is unaffected, 
such ambiguity is strategically irrelevant.
This conclusion holds for any other parametrization of the set, not just for the $\varepsilon$-contaminated model.
\end{example}
Therefore we ought be careful as to how we specify the set $\Gamma_{n}$. 
Instead of parametrizing $\Gamma_n$, 
we only assume that $\Gamma_{n}\subset{\mathcal P}_{n}$ is a weakly compact and convex neighborhood around $F_{0}(\cdot|n)$, which is sufficient to guarantee that a unique absolutely continuous least-favorable distribution (and density) exists.  
\begin{assumption} For all $n\in{N}$, 
it is common knowledge among bidders that the set $\Gamma_{n}\subset{\mathcal P}_{n}$ forms a weakly compact and convex neighborhood of strictly increasing and continuously differentiable distributions around $F_{0}(\cdot|n)$, such that 
$F^*(\cdot|n)\in \Gamma_{n},  F^*(v|n)\leq F(v|n)$ for all $F(\cdot|n)\in \Gamma_{n}$ and has a density $f^{*}(\cdot|n)>0$, a.e.  
\label{ass:3}
\end{assumption}
\noindent
Under Assumptions \ref{ass:1} -- \ref{ass:3}, 
each bidder chooses a bid to maximize her expected utility with respect to $F^*(\cdot|n)$ in $\Gamma_n$ 
such that 
$F^*(v|n)\leq F(v|n)$ for all $v\in[\underline{v}(n),\overline{v}(n)]$ and for all $F(v|n)\in\Gamma_n$.
Assumption \ref{ass:3} guarantees that $F^*(\cdot|n)\in\Gamma_n$ and it is unique.
The assumption implies that
all distributions in $\Gamma_n$ are mutually absolutely continuous with the common support, $[\underline{v}(n),\overline{v}(n)]$.
(Hence, the $\varepsilon$-contamination in Example \ref{ex:epsilon} is excluded.)
Moreover, since only the lower envelop $F^*(v|n)$ is common knowledge, not the entire set $\Gamma_n$, we implicitly allow bidders to have asymmetric beliefs, i.e., $\Gamma_{n,i}$ for each $i=1,\ldots,n$, as long as each set, $\Gamma_{n,i}$, has the identical lower envelop, 
$F^*(v|n)$.

We focus only on a symmetric pure strategy Bayesian Nash equilibrium. 
In particular, every bidder conjectures that her opponents use a strictly increasing (pure) bidding strategy, and announces a bid that is a best response to that conjecture and at the equilibrium the conjecture turns out to be true.
Once we recognize  $F^{*}(\cdot|n)$ plays the same role under MEU as $F_{0}(\cdot|n)$ under EU, the 
existence of a unique, symmetric Bayesian Nash equilibrium characterized by a strictly increasing $\beta_{n}(\cdot)$  follows from  \cite{Maskin_Riley_1984, Athey_2001}.
This bidding strategy maps the latent value to the observed bid. 
\cite{Guerre_Perrigne_Vuong_2000} 
showed 
that when bidders are risk neutral, 
this map can be inverted to link each bid to a unique value, 
thereby identifying $F_{0}(\cdot|n)$. 
\cite{Guerre_Perrigne_Vuong_2009, Campo_Guerre_Perrigne_Vuong_2011} 
extended 
this result to allow for risk averse bidders.
Now, we extend these results to the MEU representation.

Let $D:[0,1]\rightarrow [0,1]$ solve the $\min$ part of the bidder's objective, 
such that $D\left[F_0(v|n)\right]:=F^*(v|n)=\min_{F\in\Gamma_{n}}F(v|n), \forall v\in[\underline{v}(n),\overline{v}(n)]$. 
Equivalently, for all $\gamma\in[0,1]$
\begin{eqnarray}
D(\gamma):=F^*\left[F_0^{-1}(\gamma|n)\Big|n\right]\label{eq:distort},
\end{eqnarray}
and hence 
it maps the true probability $F_0(\cdot|n)$ 
to the most pessimistic one $F^{*}(\cdot|n)$. 
So, $D(\gamma)\leq \gamma$ and $D'(0)>0$. 
Whenever there is ambiguity, 
$D(\gamma)$ would be less than $\gamma$ for all $\gamma\in(0,1)$ 
so that the distance of $D(\cdot)$ from the $45^{\circ}$ line measures the extent of ambiguity. 
When all $(n-1)$ bidders follow $\beta_{n}(\cdot)$, a bidder with value $v$ solves: 
\begin{eqnarray*}
\max_{x\in\mathds{R}_+}\min_{F\in\Gamma_{n}}\left\{u\left[v-\beta_n(x)\right] F(x|n)^{n-1}\right\}=\max_{x\in\mathds{R}_+}\left\{u\left[v-\beta_n(x)\right] D\left[F_0(x|n)\right]^{n-1}\right\}.\label{eq:obj_GS1}
\end{eqnarray*}
The first-order condition with respect to $x$, when evaluated at $x=v$ gives 
$$
-u'\left[v-\beta_n(x)\right] \beta_n'(x)D\left[F_0(x|n)\right]
+u\left[v-\beta_n(x)\right] (n-1)D'\left[F_0(x|n)\right]f_0(x|n)=0.
$$
Rearranging the terms
gives   
a differential equation that characterizes 
the 
optimal bidding 
strategy 
as: 
\begin{equation}
\frac{u\left[v-\beta_n(v)\right]}{u'\left[v-\beta_n(v)\right]} 
= \frac{D\left[F_0(v|n)\right]}{D'\left[F_0(v|n)\right]}\left[\frac{1}{(n-1) f_0(v|n)/\beta_n'(v)}\right].\label{eq:foc1}
\end{equation} 
\begin{lemma}
Let $\lambda(x):=u(x)/u'(x)$ for $x\in\mathds{R}$. For all $v\in(\underline{v}(n), \overline{v}(n)]$ the equilibrium bidding strategy for risk averse bidders satisfies the differential equation (\ref{eq:foc1}), $\beta_{n}(\underline{v}(n))=\underline{v}(n)$ and $\beta'_{n}(\underline{v}(n))= \frac{(n-1)\lambda'(0)}{(n-1)\lambda'(0)+1}$.
\end{lemma}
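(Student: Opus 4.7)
The plan is to verify in turn the three claims of the lemma---the differential equation, the matching condition at the lowest type, and the boundary slope---handling the first two quickly and doing the real work on the last.

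The differential equation \eqref{eq:foc1} is essentially already obtained in the calculation immediately preceding the lemma: plugging a strictly increasing symmetric candidate $\beta_n$ into the maxmin objective, differentiating the inner payoff in $x$, imposing the FOC at the equilibrium fixed point $x=v$, and rearranging in terms of $\lambda=u/u'$ gives \eqref{eq:foc1}. The smoothness needed to justify the FOC is inherited from the $C^1$ assumptions on $u$ and on elements of $\Gamma_n$ together with differentiability of $F_0$. For the endpoint value $\beta_n(\underline{v}(n))=\underline{v}(n)$, I would invoke the standard auction argument: with a strictly increasing symmetric $\beta_n$ and an atomless value distribution, the lowest type wins with probability zero, any bid strictly above $\underline{v}(n)$ is strictly dominated (negative payoff conditional on winning, weakly dominated unconditionally), and continuity of $\beta_n$ at the boundary forces the candidate to match the type's own value.

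For the slope, I would take $v\downarrow\underline{v}(n)$ in \eqref{eq:foc1}. Both sides vanish: the LHS because $v-\beta_n(v)\to0$ and $\lambda(0)=u(0)/u'(0)=0$; the RHS because $D[F_0(v|n)]\to D(0)=0$. Writing $\alpha:=\beta_n'(\underline{v}(n))$ and Taylor-expanding each side to leading order in $(v-\underline{v}(n))$, using $v-\beta_n(v)\sim(1-\alpha)(v-\underline{v}(n))$ on the left and $D[F_0(v|n)]\sim D'(0)\,f_0(\underline{v}(n)|n)(v-\underline{v}(n))$ on the right, yields
\[
\lambda'(0)(1-\alpha)(v-\underline{v}(n)) \;\sim\; \frac{\alpha}{n-1}\,(v-\underline{v}(n)),
\]
where on the right the factors $D'(0)$ and $f_0(\underline{v}(n)|n)$ cancel between numerator and denominator. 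Matching the linear coefficients gives the equation $(n-1)\lambda'(0)(1-\alpha)=\alpha$, whose unique solution is the stated formula.

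The main obstacle will be making this limiting argument rigorous. I need $D'(0)>0$ so that the quantile distortion is locally nondegenerate at zero, which follows from Assumption \ref{ass:3} together with the identity $f^*=D'(F_0)\,f_0$ and the positivity of $f^*$; and I need $f_0(\underline{v}(n)|n)>0$, a mild regularity condition at the left endpoint in the spirit of the Guerre-Perrigne-Vuong setting, so that the cancellations in the Taylor expansion are legitimate rather than hiding a higher-order degeneracy. Once these positivity conditions are in place, the leading-order matching determines $\alpha$ uniquely and the lemma follows.
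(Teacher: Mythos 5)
Your proposal is correct and follows essentially the route the paper relies on: the differential equation is the rearranged first-order condition already derived in the text, the boundary condition $\beta_n(\underline{v}(n))=\underline{v}(n)$ is the standard Maskin--Riley argument, and the boundary slope is obtained, exactly as in Guerre--Perrigne--Vuong, by matching the leading-order terms of the two sides of (\ref{eq:foc1}) as $v\downarrow\underline{v}(n)$, using $\lambda(0)=0$, $D(0)=0$, $D'(0)>0$ and $f_0(\underline{v}(n)|n)>0$, which yields $(n-1)\lambda'(0)(1-\alpha)=\alpha$ and hence the stated formula. The one loose end is that your argument for the lowest type only delivers $\beta_n(\underline{v}(n))\leq\underline{v}(n)$; the reverse inequality requires noting that if $\beta_n(\underline{v}(n))<\underline{v}(n)$ the lowest type could profitably deviate to a slightly higher bid, winning with strictly positive probability at a price still below her value, contradicting equilibrium.
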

\noindent The first part of the lemma means bidder with lowest value will bid her true value, \cite{Maskin_Riley_1984}, while the second part of the lemma shows that the slope of bidding strategy at the lower boundary is independent of the distribution, \cite{Guerre_Perrigne_Vuong_2009}. 

Let $H(\gamma):=D(\gamma)/D'(\gamma)$ for $\gamma\in[0,1]$, or alternatively
\begin{eqnarray*}
H(\gamma)=F^*\left[ F_0^{-1}(\gamma|n)\Big|n\right]\frac{f_0\left[ F_0^{-1}(\gamma|n)\Big|n\right]}{f^*\left[F_0^{-1}(\gamma|n)\Big| n\right]}\label{eq:hfun}.
\end{eqnarray*}
Substituting $\lambda(\cdot)$ and $H(\cdot)$ in the FOC (\ref{eq:foc1}) gives
\begin{eqnarray}
\lambda\left[v-\beta_n(v)\right] = \frac{H\left[F_0(v|n)\right]}{(n-1) f_0(v|n)/\beta_n'(v)}.\label{eq:focGS1}
\end{eqnarray}
Before addressing the problem of identification, we define the observables. 
Let $G(\cdot|n)$ be the distribution of equilibrium bid $b:=\beta_n(v)$ for $v\sim F_0(\cdot|n)$, i.e.,  
$G(b|n)=F_0[\beta^{-1}(b)|n]$ and its density is
$$
g(b|n):=\frac{f_0[\beta^{-1}(b)|n]}{\beta_n'[\beta_n^{-1}(b)]}.
$$
Let $v_{\gamma}$ and $b_{\gamma}$ be the $\gamma$-th quantile of the value and the equilibrium bid. 
Since $\gamma=F_0(v_{\gamma}|n)=G[\beta_n(v_{\gamma})|n]=G(b_{\gamma}|n)$, 
for every quantile $\gamma\in[0,1]$, (\ref{eq:focGS1}) becomes
\begin{eqnarray}
\lambda(v_{\gamma}-b_{\gamma})=\frac{H(\gamma)}{(n-1)g(b_{\gamma}|n)}.\label{eq:focGS2}
\end{eqnarray}
\noindent 
Under the i.i.d. assumption, $g(\cdot|n)$ is nonparametrically identified from the bid data, but the model primitives are not in general identified without additional assumptions, including the ones on the set $\Gamma_{n}:$

\begin{proposition}\label{prop:1}
Under assumptions \ref{ass:1} - \ref{ass:3},
the valuation distribution $F_0(\cdot|n)$ is not identified by the knowledge of the bid distribution, i.e., $G(\cdot|n)$. 
\end{proposition}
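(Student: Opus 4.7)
The plan is to exhibit two distinct primitive tuples $(F_0,u,D)$ and $(\tilde F_0,\tilde u,\tilde D)$ that both satisfy Assumptions \ref{ass:1}--\ref{ass:3} and generate the very same observable bid distribution $G(\cdot|n)$. The starting point is the FOC in quantile form (\ref{eq:focGS2}), which, given the observable $g(\cdot|n)$ and $b_\gamma=G^{-1}(\gamma|n)$, reads
\begin{equation*}
v_\gamma \;=\; b_\gamma + \lambda^{-1}\!\left(\frac{H(\gamma)}{(n-1)g(b_\gamma|n)}\right)
\end{equation*}
at every $\gamma\in(0,1)$. Since $F_0$ enters only through $v_\gamma=F_0^{-1}(\gamma|n)$, any alternative choice of the pair $(\lambda,H)$ that remains in the admissible functional class induces, through the same identity, a distinct $\tilde F_0$ that reproduces $G$. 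In other words, we have one scalar equation at each quantile and three unknown functions $(F_0,\lambda,H)$, leaving two functional degrees of freedom.

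For concreteness I would construct the counterexample in two steps. First, fix any $G(\cdot|n)$ generated by a true ambiguity model with primitives $(F_0,u,D)$ where $D(\gamma)<\gamma$ for $\gamma\in(0,1)$. Second, build an EU alternative by setting $\tilde D(\gamma)=\gamma$ (so $\tilde H(\gamma)=\gamma$) and picking a utility $\tilde u$ (for instance the same $u$, or a CRRA close to it). Define $\tilde v_\gamma:=b_\gamma+\tilde\lambda^{-1}\!\bigl(\gamma/[(n-1)g(b_\gamma|n)]\bigr)$ and take $\tilde F_0$ to be the distribution with quantile function $\gamma\mapsto\tilde v_\gamma$. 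By construction the FOC (\ref{eq:focGS2}) holds for the triple $(\tilde F_0,\tilde u,\tilde D)$, so this alternative is observationally indistinguishable from the truth. Because $D\ne\tilde D$ forces $F_0\ne\tilde F_0$ (apart from the trivial boundary case), $F_0$ is not pinned down by $G(\cdot|n)$.

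The main obstacle is checking that the alternative primitives remain inside the feasible set. Specifically, I must ensure (i) $\tilde v_\gamma$ is strictly increasing in $\gamma$ with the correct boundary values $\tilde v_0=\underline v(n)$ and $\tilde v_1=\overline v(n)$, so that $\tilde F_0$ is an admissible absolutely continuous distribution; (ii) $\tilde u$ is strictly increasing, weakly concave, and satisfies $\tilde u(0)=0$; and (iii) $\tilde D$ lies below the identity, is continuously differentiable, and comes from a weakly compact convex neighborhood $\tilde\Gamma_n$ of $\tilde F_0$ as required by Assumption \ref{ass:3}. Monotonicity of $\tilde v_\gamma$ follows from the strict monotonicity of $b_\gamma$ together with the continuity and invertibility of $\tilde\lambda$, provided $\tilde u$ and $\tilde H$ are chosen so that $\tilde H(\gamma)/g(b_\gamma|n)$ is well-behaved near the endpoints; the boundary condition at $\gamma=0$ is then automatic from the lemma preceding the proposition, which gives $\beta_n(\underline v(n))=\underline v(n)$.

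Finally, since the argument above already exposes two free functional dimensions on top of $F_0$, one could alternatively appeal directly to \cite{Guerre_Perrigne_Vuong_2009}: their non-identification result for the EU model with risk-averse bidders corresponds to the sub-case $D(\gamma)\equiv\gamma$ of the present setup, so a fortiori the richer MEU model is not identified. The explicit construction above is nonetheless preferable because it motivates the extra restrictions (exogenous entry, CRRA utility) that the paper imposes to restore identification in the subsequent propositions.
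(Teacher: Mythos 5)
Your overall strategy---invert the quantile first-order condition $v_\gamma = b_\gamma + \lambda^{-1}\bigl(H(\gamma)/[(n-1)g(b_\gamma|n)]\bigr)$ under an alternative pair $(\tilde\lambda,\tilde H)$ and read off a new $\tilde F_0$ from the resulting quantile function---is sound in principle, but it is not the route the paper takes for this proposition. The paper's proof of Proposition \ref{prop:1} is a fully explicit closed-form counterexample: it keeps both structures risk neutral, takes $F\equiv U(0,1)$ with $D(\gamma)=(e^{2\gamma}-1)/(e^{2}-1)$, and trades the $D$-function against the value distribution (choosing $\tilde D(\gamma)=(e^{\gamma}-1)/(e-1)$ and an explicit $\tilde F$) so that the two structures induce the same pessimistic distribution and hence the same observables; the FOC-inversion technique you use is essentially what the paper deploys later, in the proof of Proposition \ref{prop:2}. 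Your ``a fortiori from \cite{Guerre_Perrigne_Vuong_2009}'' remark is also a legitimate shortcut, provided the EU structures appealed to are admissible under Assumption \ref{ass:3}.

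That proviso is where the genuine gaps lie. First, your claim that monotonicity of $\gamma\mapsto\tilde v_\gamma$ ``follows from the strict monotonicity of $b_\gamma$ together with the continuity and invertibility of $\tilde\lambda$'' is false as stated: the second summand $\tilde\lambda^{-1}\bigl(\tilde H(\gamma)/[(n-1)g(b_\gamma|n)]\bigr)$ can decrease faster than $b_\gamma$ increases, and ruling this out is exactly the rationalizability condition $\xi'>0$ in the paper's Lemma \ref{lemma1}. You flag this as ``the main obstacle'' but never discharge it, and a non-identification proof needs at least one fully verified counterexample; the standard repair is to take $(\tilde\lambda,\tilde H)$ to be a small perturbation of the true $(\lambda,H)$ and invoke continuity, or to write down a closed-form example as the paper does. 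Second, your concrete alternative sets $\tilde D(\gamma)=\gamma$, i.e.\ the no-ambiguity boundary case, which sits awkwardly with Assumption \ref{ass:3}'s requirement that $\Gamma_n$ be a (nondegenerate) weakly compact convex neighborhood of $F_0$ with $F^*$ its lower envelope; the paper avoids this by choosing both $D$ and $\tilde D$ strictly below the identity on $(0,1)$. Both gaps are repairable with small modifications, but as written the counterexample is not verified to lie in the admissible class. A minor further point: you need not insist that $\tilde v_1$ equal $\overline v(n)$---the support is itself a primitive unknown to the econometrician, so the alternative structure may legitimately have a different upper endpoint.
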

\begin{proof}
Let $[U(x)=x, F\equiv U(0,1)]$ and $D(\gamma) =(\exp(2\gamma)-1)/(\exp(2)-1)$ be the model.   
Then the equilibrium bidding strategy is given by   
\begin{eqnarray*}
\beta_{n}(v)=v-\int_{0}^{v}\left(\frac{F^{*}(t)}{F^{*}(v)}\right)^{n-1}dt=v-\int_{0}^{v}\left(\frac{D(F(t))}{D(F(v))}\right)^{n-1}dt=v-\int_{0}^{v}\left(\frac{\exp(2t)-1}{\exp(2v)-1)}\right)^{n-1}dt. 
\end{eqnarray*}
Consider another model with risk neutral bidders, $\tilde{D}(\gamma)=(\exp(\gamma)-1)/(\exp(1)-1)$ and some a new CDF $\tilde{F}(\cdot)\neq F(\cdot)$ (to be determined shortly below).
Then the equilibrium bidding strategy is given by  
\begin{eqnarray*}
\tilde{\beta}_{n}(v)=v-\int_{0}^{v}\left(\frac{\tilde{D}(\tilde{F}(t))}{\tilde{D}(\tilde{F}(v))}\right)^{n-1}dt=v-\int_{0}^{v}\left(\frac{\exp(\tilde{F}(t))-1}{\exp(\tilde{F}(v))-1}\right)^{n-1}dt. 
\end{eqnarray*}
The two models are observationally equivalent if 
$$
\tilde{F}(v)=\ln\left(1+(\exp(2v)-1)\frac{\exp(1)-1}{\exp(2)-1}\right)
$$

\end{proof}
\noindent
In view of this result, we consider auctions with exogenous participation. 

\begin{assumption}\label{ass:4}
{Exogenous Participation:}
$\forall n\in{N}, \Gamma_{n}=\Gamma$ and  $F_{0}(\cdot|n)=F_{0}(\cdot)$.\footnote{
So the set ${\mathcal P}_{n}$ is the same for all $n\in{ N}$ and because $\Gamma$ will also be the same, so will $F^{*}(\cdot)$
be 
. } 
\end{assumption}
\noindent Assumption \ref{ass:4} has been used in the literature by \cite{Athey_Haile_2002, Bajari_Hortacsu_2005, Guerre_Perrigne_Vuong_2009,Aradillas-Lopez_Gandhi_Quint_2012} among others.
It is equivalent to assuming that there is some $n'$ potential bidders with values $(v_{1},\ldots, v_{n'})$ out of which a random subset of $n\leq n'$ bidders participate in a given auction. 
This identifying assumption is appropriate for the experiment data where the number of bidders are exogenously chosen by the experimenter. 
When the utility function is unspecified, however, 
this exclusion restriction is still insufficient for identification. 

\begin{proposition}\label{prop:2}
Under Assumptions \ref{ass:1}--\ref{ass:4}, the model structure $[u(\cdot), F_{0}(\cdot)]$ is not nonparametrically identified by $G(\cdot|n_{1})$ and $G(\cdot|n_{2})$ with $n_{1}<n_{2}$. 
 
\end{proposition}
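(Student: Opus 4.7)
The plan is to exhibit two distinct structures $[u(\cdot), F_0(\cdot), D(\cdot)]$ and $[\tilde u(\cdot), \tilde F_0(\cdot), \tilde D(\cdot)]$ satisfying Assumptions \ref{ass:1}--\ref{ass:4} that generate exactly the same pair of bid distributions $G(\cdot|n_1)$ and $G(\cdot|n_2)$. The driving observation is a dimension count: from the FOC (\ref{eq:focGS2}), the two observables pin down, for each quantile $\gamma\in[0,1]$, only the two equations
\[
\lambda(v_\gamma - b^{(j)}_\gamma) \;=\; \frac{H(\gamma)}{(n_j-1)\,g(b^{(j)}_\gamma|n_j)},\qquad j=1,2,
\]
in the three unknown objects $\lambda=u/u'$, $v_\gamma$ (equivalently $F_0$), and $H=D/D'$ (equivalently $D$). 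Fixing any admissible $\lambda$ leaves two equations in the two pointwise unknowns $(v_\gamma,H(\gamma))$, so generically infinitely many utility profiles are consistent with the observables.

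To turn this count into a constructive counterexample, I would first fix a baseline with CRRA utility $\lambda(x)=x/(1-c)$, $F_0$ the uniform distribution on $[0,1]$, and $D$ the identity (so the baseline is a classical EU model). For these primitives, the equilibrium bidding strategies $\beta_{n_1}(\cdot)$ and $\beta_{n_2}(\cdot)$, and hence the bid quantile functions $\gamma\mapsto b^{(j)}_\gamma$ and densities $g(\cdot|n_j)$, are explicit. Next, pick an alternative utility $\tilde u$ whose derivative at zero matches, $\tilde\lambda'(0)=1/(1-c)$ (so the lemma's boundary slope is preserved and the model remains consistent with the bid density at the lowest bid, which is the only piece of the observables that constrains $\lambda$ locally at $0$), but with $\tilde\lambda(x)\ne \lambda(x)$ for $x>0$; for concreteness one can take $\tilde u$ a small smooth deformation of $u$ concentrated away from $x=0$. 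For this fixed $\tilde\lambda$, solve the two-equation system pointwise in $\gamma$: taking the ratio of the two FOCs yields
\[
\frac{\tilde\lambda(\tilde v_\gamma - b^{(1)}_\gamma)}{\tilde\lambda(\tilde v_\gamma - b^{(2)}_\gamma)} \;=\; \frac{(n_2-1)\,g(b^{(2)}_\gamma|n_2)}{(n_1-1)\,g(b^{(1)}_\gamma|n_1)},
\]
whose strictly monotone left-hand side in $\tilde v_\gamma$ determines $\tilde v_\gamma$ uniquely, and then either FOC delivers $\tilde H(\gamma)$. Setting $\tilde F_0(\tilde v_\gamma)=\gamma$ defines $\tilde F_0$, while $\tilde D$ is recovered by integrating the ODE $\tilde D'(\gamma)=\tilde D(\gamma)/\tilde H(\gamma)$ with boundary conditions $\tilde D(0)=0$ and $\tilde D(1)=1$.

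The main obstacle is Step 4: verifying that the reconstructed $(\tilde F_0,\tilde D)$ is admissible in the sense of Assumption \ref{ass:3}, namely that $\tilde F_0$ is a genuine absolutely continuous CDF on a compact support with strictly positive density, and that $\tilde D$ lies weakly below the identity on $[0,1]$ with $\tilde D'(0)>0$ (so $\tilde D$ is a valid distortion arising as the lower envelope of some convex weakly compact $\tilde\Gamma\subset\mathcal P$). The cleanest way to sidestep this is a continuity argument: because the baseline is the EU structure (for which $\tilde D=\mathrm{id}$ and $\tilde F_0=F_0$ solve the system), taking $\tilde u$ sufficiently close to $u$ in $C^1$ while preserving $\lambda'(0)$ forces the reconstructed $(\tilde F_0,\tilde D)$ to be $C^0$-close to $(F_0,\mathrm{id})$; the constraints $\tilde D\le\mathrm{id}$ and monotonicity of $\tilde F_0$ then hold in an open neighbourhood of the baseline, while the condition $\tilde D(1)=1$ can be guaranteed by normalising $\tilde u$ by a scalar multiple (which does not change $\lambda$). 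This produces a genuinely distinct admissible structure delivering identical $G(\cdot|n_1)$ and $G(\cdot|n_2)$, establishing the claimed non-identification.
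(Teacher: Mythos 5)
Your overall strategy---construct a second observationally equivalent structure via the quantile-matching condition coming from the two FOCs---is the same mechanism the paper exploits (condition 2-b of its rationalizability Lemma~\ref{lemma1}), but you perturb the utility $\tilde\lambda$ and back out $(\tilde F_0,\tilde H)$, whereas the paper keeps $\lambda$ fixed and perturbs the ambiguity component directly, setting $\tilde H(\gamma)=\iota+\gamma$ and reading off $\tilde v_\gamma$ in closed form from the quantile equation. Your route is viable in principle, and your observation that the boundary bid densities constrain only $\lambda'(0)$ (via $\beta_n'(\underline v)=(n-1)\lambda'(0)/[(n-1)\lambda'(0)+1]$) is correct and is exactly the degree of freedom one wants to exploit.

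The gap is in your Step 4. The continuity argument cannot deliver the constraint $\tilde D\le\mathrm{id}$: you chose the EU baseline $D=\mathrm{id}$, which is a boundary point of the admissible set $\{D:D(\gamma)\le\gamma\}$, so that set is \emph{not} a $C^0$-neighbourhood of the baseline and ``$\tilde D$ close to $\mathrm{id}$'' does not imply ``$\tilde D\le\mathrm{id}$.'' Concretely, if the reconstruction returns $\tilde H(t)=t(1+\epsilon)$ with $\epsilon>0$ arbitrarily small, then $\tilde D(\gamma)=\gamma^{1/(1+\epsilon)}>\gamma$ on $(0,1)$, violating Assumption~\ref{ass:3} (which forces $F^*\le F_0$, i.e.\ $D\le\mathrm{id}$). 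To close this you must either (i) verify the sign of the perturbation, i.e.\ show $\tilde\lambda$ can be chosen so that $\int_\gamma^1\bigl[1/\tilde H(t)-1/t\bigr]\,dt\ge0$ for every $\gamma$, or (ii) start from a baseline whose $D$ is strictly below the identity on the interior and bounded away from it on compact subsets, so that a small enough perturbation stays admissible (with separate care at the endpoints where $D$ and $\mathrm{id}$ coincide). A related slip: you cannot impose both $\tilde D(0)=0$ and $\tilde D(1)=1$ on the first-order ODE $\tilde D'=\tilde D/\tilde H$; the normalization $\tilde D(1)=1$ selects $\tilde D(\gamma)=\exp[-\int_\gamma^1 dt/\tilde H(t)]$, and $\tilde D(0)=0$ then becomes a divergence condition on $\int_0 dt/\tilde H(t)$ that must be checked (it holds when $\tilde H(t)=O(t)$ near $0$). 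For what it is worth, the paper's own construction skips this admissibility check as well---$\tilde H(\gamma)=\iota+\gamma$ yields $\tilde D(\gamma)=(\iota+\gamma)/(1+\iota)$, which lies above the identity and has $\tilde D(0)>0$---so the verification you correctly flagged as ``the main obstacle'' is genuinely the crux, and your proposal as written does not resolve it.
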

\begin{proof}
We begin by stating (without a proof) the rationalizability lemma from \cite{Guerre_Perrigne_Vuong_2009}, adapted to our setting. 
\begin{lemma}\label{lemma1}
Let ${\bf G}_{j}(\cdot|n_{j})$ be the joint distribution of $(b_{1}^{j}, b_{2}^{j}, \ldots, b_{n_{j}}^{j})$, conditional on $n_{j}$ for $ j=1,2.$
There is an IPV auction model with maxmin expected utility, i.e., $[u(\cdot), F_{0}(\cdot)]$, that rationalizes both ${\bf G}_{1}(\cdot|n_{1})$ and ${\bf G}_{2}(\cdot|n_{2})$ if and only if the following conditions hold: 
\begin{enumerate}
\item ${\bf G}_{j}(b_{1}^{j},\ldots, b_{n_{j}}^{j}|n_{j}) = \prod_{i=1}^{n_{j}} G_{j}(b_{i}^{j}|n_{j})$, where $G_{j}(\cdot|n_{j})$ is the bid distribution form auction with $n_{j}$ bidders. 
\item $\exists \lambda: \mathbb{R}_{+}\rightarrow\mathbb{R}_{+}$ and $\exists H: [0,1]\rightarrow \mathbb{R}_{+}$ such that $\lambda(0)=0, H(0)=0$, $H(\cdot)$ is continuously differentiable and $\lambda'(\cdot)\geq 1$ such that $\xi'(\cdot)>0$ on $[\underline{b},\overline{b}]$ where $\xi(b, u, G, n, H)$ is such that:   
\begin{enumerate}
\item $\xi(b^{j}, u, G_{j}, n_{j}, H) := b^{j} + \lambda^{-1}\left[\frac{H(G_{j}(b^{j}|n_{j}))}{(n_{j}-1)g_{j}(b^{j}|n_{j})}\right], j=1,2.$ 
\item For each quantile $\gamma\in[0,1]$, $b_{\gamma}^{1} + \lambda^{-1}\left[\frac{H(\gamma)}{(n_{1}-1)g(b_{\gamma}^{1}|n_{1})}\right]=b_{\gamma}^{2} + \lambda^{-1}\left[\frac{H(\gamma)}{(n_{2}-1)g(b_{\gamma}^{2}|n_{2})}\right].$  
\end{enumerate}
\end{enumerate}
\end{lemma}
\noindent  Then, we can identify $\lambda^{-1}(\cdot)$ by following \cite{Guerre_Perrigne_Vuong_2009}.\footnote{ $\lambda(\cdot)$ is invertible because $\lambda'(\cdot)\geq1$.}
Let  $[F(\cdot), \lambda(\cdot), H(\gamma):=\gamma]$ and $[\tilde{F}(\cdot), \lambda(\cdot), \tilde{H}(\gamma):=\iota+\gamma]$, with $\iota\in(0,1)$ be two model structures, and  
$\tilde{F}(\cdot)$ be the distribution of $\tilde{v}$ defined as follows:  for every quantile $\gamma\in (0,1]$ compute $v(\gamma) =  F^{-1}(\gamma)$ and determine $b_{\gamma}^{j}=\beta[v_{\gamma}, F(\cdot), n_{j}, H]$ and 
 
$$
\tilde{v}_{\gamma} = b_{\gamma}^{j} + \lambda^{-1}\left[\frac{\iota +\gamma}{(n_{j}-1)g_{j}(b_{\gamma}^{j}|n_{j})}\right].
$$ 
Since the two model structures satisfy condition 2-b of Lemma \ref{lemma1}, they both rationalize the same data and hence, are observationally equivalent.
\end{proof}
\noindent
This result is important because it shows that MEU and EU are observationally equivalent even under exogenous variation of the number of bidders. 
This equivalence is not because we use MEU. 
For instance, consider the multiplier preference of \cite{Hansen_Sargent_2001} as an alternative to MEU. 
There, it can be shown that this model with ambiguity is equivalent to a model where bidders are more risk averse but do not have any ambiguity.\footnote{ The proof of this equivalence uses results from \cite{Strzalecki_2011} and \cite{Dupuis_Ellis_1997}, and is available upon request.} 
Moreover, without ambiguity the model structure $[u(\cdot),F_0(\cdot)]$ is just-identified by the knowledge of $G(\cdot|n_1)$ and $G(\cdot|n_2)$ with $n_1\neq n_2$, and with ambiguity we have to identify an extra parameter, the ambiguity-function $D(\cdot)$.    
In view of this result, we restrict ourselves to CRRA family, which is also the most widely used in the empirical literature.  
\begin{assumption}\label{ass:crra} 
The utility function is CRRA, i.e., 
$
u(w) = 
\frac{w^{1-\theta}}{1-\theta}, \theta\in[0,1)$. 
\end{assumption}

Thus we impose a parametric 
functional 
form for risk aversion and treat ambiguity aversion nonparametrically. 
Whether or not this way of prioritizing the estimation task is the right way depends on the effect of risk aversion that cannot be captured by CRRA utility. 
For that we would need to estimate a model of nonparametric utility and nonparametric ambiguity, but the only paper that estimates risk aversion nonparametrically is \cite{Lu_Perrigne_2008} and find that CRRA utility partly captures the nonparametric utility.   
This provides some justification for our priority of ambiguity over risk aversion.\footnote{ 
This also suggests that, like  in \cite{Lu_Perrigne_2008, Athey_Levin_Seira_2011}, if we have exogenous variation in auction formats and there is exclusion restriction we might be able to identify both the utility and ambiguity nonparametrically.  
We do not pursue this line of enquiry because such data are very rare. For estimating nonparametric utility is see \cite{KimDH_2015_NonU}.
}

Then under assumption \ref{ass:crra}, $\lambda(w)=\frac{w}{1-\theta}$ when $\theta\in[0,1)$. 
As propositions \ref{prop:1} and \ref{prop:2} argue, 
the model is not identified without the exclusion restriction, Assumption \ref{ass:4}.
This is true even with the parametrized utility functions.
We now formally establish the identification of the model primitives 
with the exclusion restriction, 
under CRRA.

\begin{proposition}\label{prop:crra0}
Under assumptions \ref{ass:1} --
\ref{ass:crra},
the model structures, i.e., $[F_0(\cdot),D(\cdot),\theta]$,
are identified by $G(\cdot|n_1)$ and $G(\cdot|n_2)$ with $n_1< n_2$.
\end{proposition}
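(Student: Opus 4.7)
The plan is to identify $(\theta, D, F_0)$ sequentially: first isolate the CRRA parameter from the boundary slope of the bidding function, then recover $H(\cdot) := D(\cdot)/D'(\cdot)$ from the cross-$n$ first-order conditions on the interior, and finally invert those same FOCs to obtain $F_0$. Assumption \ref{ass:4} is crucial throughout: because $\Gamma$ and $F_0$ do not depend on $n$, neither does $D$, so the \emph{same} $H(\gamma)$ appears in the FOCs for $n_1$ and $n_2$ at the same quantile $\gamma$.

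\textbf{Step 1 (isolating $\theta$).} Since $\beta_n(\underline{v}) = \underline{v}$ for every $n$, the lowest bid $\underline{b}$ is common to $G(\cdot|n_1)$ and $G(\cdot|n_2)$. From $g(b|n) = f_0(\beta_n^{-1}(b))/\beta_n'(\beta_n^{-1}(b))$ evaluated at $b = \underline{b}$, one obtains $\beta_n'(\underline{v}) = f_0(\underline{v})/g(\underline{b}|n)$. The Lemma combined with CRRA (so $\lambda'(0) = 1/(1-\theta)$) gives $\beta_n'(\underline{v}) = (n-1)/(n-\theta)$. Taking the ratio across $n_1$ and $n_2$ eliminates the unknown $f_0(\underline{v})$:
$$
\frac{g(\underline{b}|n_1)}{g(\underline{b}|n_2)} \;=\; \frac{(n_2-1)(n_1-\theta)}{(n_1-1)(n_2-\theta)}.
$$
The right-hand side is strictly decreasing in $\theta \in [0,1)$ whenever $n_1 < n_2$, so this equation admits a unique solution, identifying $\theta$.

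\textbf{Step 2 (identifying $H$ and $D$).} Substituting $\lambda(w) = w/(1-\theta)$ into (\ref{eq:focGS2}) at $n=n_j$ yields $(v_\gamma - b_\gamma^j)/(1-\theta) = H(\gamma)/[(n_j-1)g(b_\gamma^j|n_j)]$ for $j=1,2$. Differencing eliminates $v_\gamma$ and gives
$$
H(\gamma) \;=\; \frac{b_\gamma^2 - b_\gamma^1}{(1-\theta)\left[\dfrac{1}{(n_1-1)g(b_\gamma^1|n_1)} - \dfrac{1}{(n_2-1)g(b_\gamma^2|n_2)}\right]},
$$
whose right-hand side depends only on $\theta$ (known from Step 1) and the observed bid quantiles and densities; hence $H$ is identified on $(0,1)$. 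Since $D(0) = F^*(\underline{v}) = 0$ and $D(1) = F^*(\overline{v}) = 1$, the separable ODE $D'(\gamma) = D(\gamma)/H(\gamma)$ has the unique solution
$$
D(\gamma) \;=\; \exp\!\left(-\int_\gamma^1 \frac{ds}{H(s)}\right),
$$
so $D$ is identified.

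\textbf{Step 3 (identifying $F_0$) and main obstacle.} With $\theta$ and $H(\gamma)$ known, rearranging the CRRA FOC at $n=n_1$ yields $v_\gamma = b_\gamma^1 + (1-\theta)H(\gamma)/[(n_1-1)g(b_\gamma^1|n_1)]$ for every $\gamma \in (0,1)$. Since $v_\gamma = F_0^{-1}(\gamma)$, this pins down the quantile function of $F_0$, and hence $F_0$ itself. The delicate step is Step 1: one must argue that the boundary bid density $g(\underline{b}|n)$ is well-defined and strictly positive (which follows from the smoothness of $\Gamma$ and $F_0$ in Assumption \ref{ass:3}), and that the unknown $f_0(\underline{v})$ is cleanly cancelled by taking a ratio across two sample sizes. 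Once the injectivity of $\theta \mapsto (n_2-1)(n_1-\theta)/[(n_1-1)(n_2-\theta)]$ is verified, the remaining arguments reduce to algebraic inversion of the FOC and integration of a separable linear ODE.
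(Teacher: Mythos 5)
Your proposal is correct and follows essentially the same route as the paper's proof: identify $\theta$ from the ratio of boundary bid densities via $\beta_n'(\underline{v})=(n-1)/(n-\theta)$, recover $H(\gamma)$ by differencing the quantile first-order conditions across $n_1$ and $n_2$, integrate the separable ODE to get $D$, and read off $F_0$ from the quantile relation. The only cosmetic difference is that you argue uniqueness of $\theta$ via monotonicity of the ratio map, whereas the paper simply solves the resulting linear equation in closed form.
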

\begin{proof}
We identify the risk aversion parameter and then identify the valuation distribution. 
Using $\beta_{n}'(v)=f_{0}(v|n)/g(\beta_{n}(v)|n)$ for $n=n_{1}$ and $n_{2}$ we get 
$$
\frac{\beta_{n_{1}}'(v)}{\beta_{n_{2}}'(v)}=\frac{f_{0}(v|n_{1})}{g(\beta_{n_{1}}(v)|n_{1})}\times \frac{g(\beta_{n_{2}}(v)|n_{2})}{f_{0}(v|n_{2})}=\frac{g(\beta_{n_{2}}(v)|n_{2})}{g(\beta_{n_{1}}(v)|n_{1})},
$$
where the second equality followed from assumption \ref{ass:4}, i.e. $f_{0}(v|n)=f_{0}(v)$. 
Evaluating the above equation at the lower boundary $v=\underline{v}$, and using $\lambda'(0)=\frac{1}{1-\theta}$ (assumption \ref{ass:crra}) in Lemma 1, i.e. $\beta_{n}'(\underline{v})=(n-1)/(n-\theta),$ gives  
$$
\frac{(n_{1}-1)(n_{2}-\theta)}{(n_{1}-\theta)(n_{2}-1)}=\frac{\beta_{n_{1}}'(\underline{v})}{\beta_{n_{2}}'(\underline{v})}=\frac{g(\underline{b}^{2}|n_{2})}{g(\underline{b}^{1}|n_{1})},
$$
and thus identifying $\theta$ as 
\begin{eqnarray}
\theta = \frac{n_{2}(n_{1}-1)g(\underline{b}^{2}|n_{2})-n_{1}(n_{2}-1)g(\underline{b}^{1}|n_{1})}{(n_{1}-1)g(\underline{b}^{2}|n_{2})-(n_{2}-1)g(\underline{b}^{1}|n_{1})}.\label{eq:theta}
\end{eqnarray}
Then using $\lambda^{-1}(y)=(1-\theta)y$ in (\ref{eq:focGS2}), we get
$$
v-b=\lambda^{-1}\left\{\frac{H[G(b|n)]}{(n-1)g(b|n)}\right\}=(1-\theta)\left\{\frac{H[G(b|n)]}{(n-1)g(b|n)}\right\}
$$
For each quantile $\gamma\in[0,1]$, 
let $v_{\gamma}\in[\underline{v},\overline{v}]$ 
such that $F_0(v_{\gamma})=\gamma$, and 
$b_{\gamma}^j:=\beta_{n_j}(v_{\gamma})$. 
Then, since $G(b_{\gamma}^j|n_j)=G[\beta_{n_j}(v_{\gamma})|n_j]=F_0(v_{\gamma})=\gamma$, for each $\gamma\in[0,1]$, we have
\begin{eqnarray}
v_{\gamma}= b_{\gamma}^j+\frac{ (1-\theta)H(\gamma)}{(n_j-1)g(b_{\gamma}^j|n_j)}\label{eq:lll}.
\end{eqnarray}
where $j\in\left\{1,2\right\}$.
Equating the quantiles for $v$ under two auctions, we identify 
\begin{eqnarray*}
 H(\gamma) &=& \frac{b_{\gamma}^{2}-b_{\gamma}^{1}}{1-\theta}\left[\frac{1}{(n_{1}-1)g(b_{\gamma}^{1}|n_{1})}-\frac{1}{(n_{2}-1)g(b_{\gamma}^{2}|n_{2})}\right]^{-1},\\ 
D(\gamma) &=&\exp\left[-\int_{\gamma}^{1}\frac{1}{H(t)}dt\right].
\end{eqnarray*}
Once $D(\cdot)$ is identified,  $F_0(\cdot)$ can be identified from equation (\ref{eq:lll}).
\end{proof}
\noindent 
The bid distributions are directly identified by the observed bid data and 
the CRRA parameter $\theta$ is identified by the 
lowest 
bidder's bidding behavior. 
After controlling 
for  
the effect of risk aversion, 
any deviation from the EU model explains bidders' attitude toward ambiguity, identifying $D$, from which the identification of $F_0$ follows. 
An immediate corollary is the identification with risk neutral bidders, which is the case of $\theta=0$. 

\section{Estimation Methodology\label{section:estimation}}
In this section, we propose a flexible Bayesian method to estimate the model primitives -- the valuation distribution, the $D$-function, and the risk aversion coefficient --  
and propose policy recommendations.
We first specify the model primitives and 
explain our econometric procedure 
by applying it to simulated data.

\subsection{Specification of Model Primitives}
We specify the model primitives directly to obtain the posterior distribution by evaluating the likelihood at each proposed parameters. 
Thus the estimation method is similar to \cite{KimDH_2014_BSL} and different from the
indirect approach of \cite{Guerre_Perrigne_Vuong_2000}.

First, we model the valuation density with the support normalized to be $[0,1)$, using a Bernstein polynomial density (henceforth, BPD) 
\begin{eqnarray}
f(v|\theta_k^f) := \sum_{j=1}^k \theta_{j,k}^f\phi_{j,k}(v),\label{eq:bernPDF}
\end{eqnarray}
where $k\in\mathds{N}\setminus\{0,1\}$, 
$\phi_{j,k}(\cdot)$ is the Beta density with parameters $j$ and $k-j+1$, and $\theta_k^f \in \Delta_{k-1}:= \{\theta_k^f\in\mathds{R}_+^k : \sum_{j=1}^k \theta_{j,k}^f = 1\}$, a $k-1$ dimensional unit simplex.
\begin{figure}[t!]
\caption{\footnotesize
Basis Functions of the Bernstein Polynomial Density
}\label{fig:bern_basis}
\begin{center}
\begin{tabular}{c}
\includegraphics[width=5in]{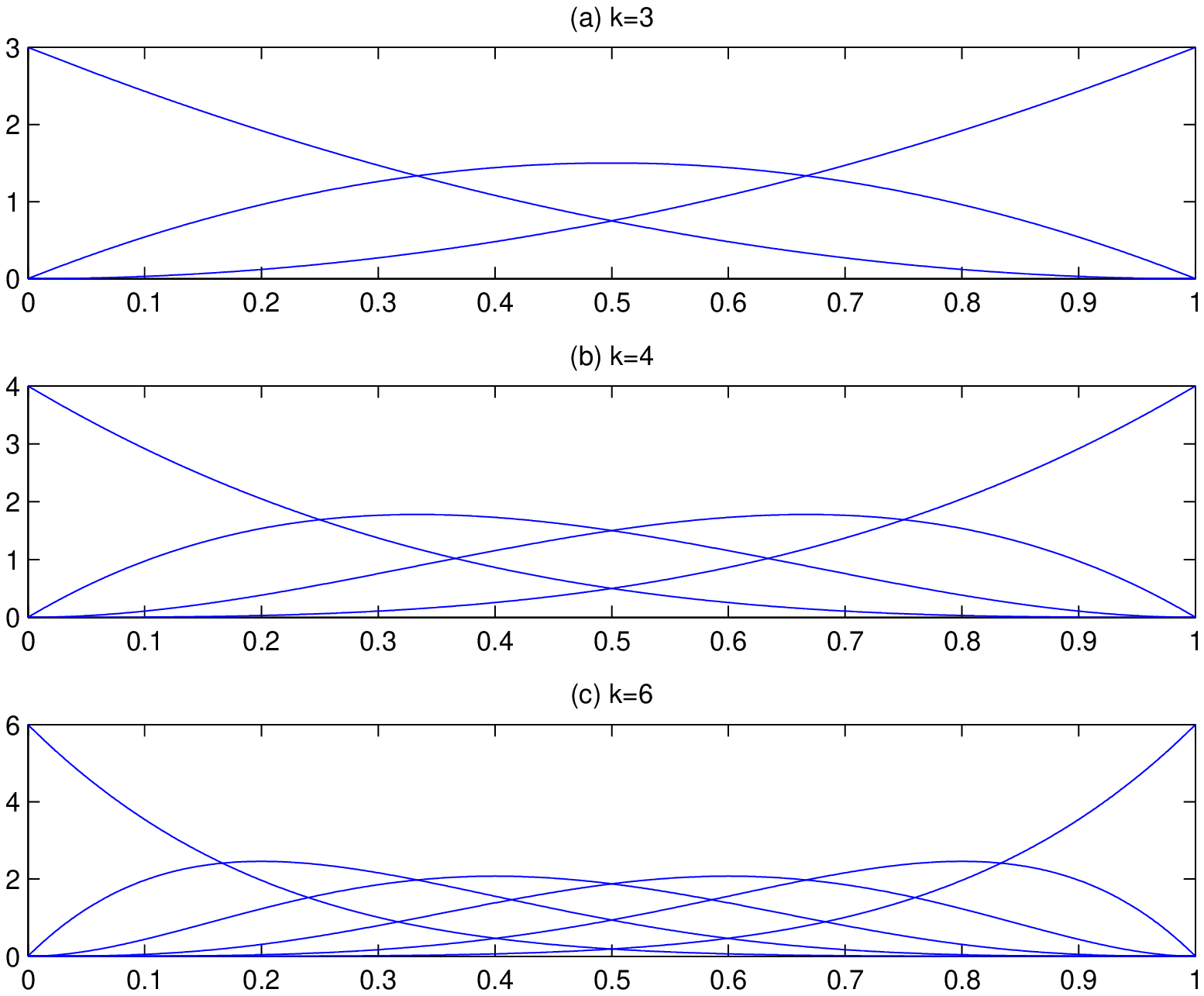}
\end{tabular}
\caption*{\footnotesize 
Panels (a)--(c) show the basis functions for the Bernstein Polynomial Densities with 3,4 and 6 components, respectively.
}
\end{center}
\end{figure}
As seen in Figure \ref{fig:bern_basis}, $\{\phi_{j,k}\}'s$ are  general and flexible. 
Since BPD is a mixture of the $k$-many Beta densities, as $k$ increases the set of BPD in Equation (\ref{eq:bernPDF}) forms a dense subset in the space of continuous densities with $[0,1]$ support.
Therefore our specification is flexible enough to represent almost any density for suitably large $k$. 
\cite{Petrone_1999b,Petrone_1999a} relied on this property of BPD to develop a nonparametric Bayesian estimation method.  

Next, we specify the $D$-function. 
Observe that, in Figure \ref{fig:bern_basis}, only $\phi_{1,k}$ in the sequence $\{\phi_{j,k}\}_{j=1}^k$ is strictly positive at 0 
and only $\phi_{k,k}$ is strictly positive at 1.
So,  
if the coefficients for $\phi_{1,k}$ and $\phi_{k,k}$ are zero, 
then the BPD in (\ref{eq:bernPDF}) is zero at 0 and 1. 
Using this property,
we specify the $D$-function as:
\begin{eqnarray}
D(\gamma|\theta_k^D)  := \gamma - \theta_0^D\left[\sum_{j=2}^{k-1} \theta_{j,k}^D\phi_{j,k}(\gamma)\right]\mathds{1}(\theta_0^D>0),\label{eq:bernD}
\end{eqnarray}
where $\theta_k^D:=(\theta_0^D,\theta_{2,k}^D,\ldots,\theta_{k-1,k}^D)\in\mathds{R}\times\Delta_{k-3}$ and $\mathds{1}(\cdot)$ is the indicator function. 
The second term in (\ref{eq:bernD}), after the negative sign, is equal to zero at 0 and 1, and it is bounded as it is proportional to the BPD. 
Therefore, $D(\gamma|\theta_k^D)$ in (\ref{eq:bernD}) passes through $(0,0)$ and $(1,1)$ and always bounded from above by the $45^\circ$ line. 
When $D(\cdot)$ is equal to the $45^\circ$ line, there is no distortion and hence, no ambiguity.
This specification is useful to determine ambiguity because the \emph{presence} of ambiguity is completely represented by one parameter $\theta_0^D$.   
We would conclude that the bidders are ambiguity averse (respectively, neutral) 
if the posterior probability of the event $\{\theta_0^D\leq0\}$ is less (respectively, greater) than  the posterior probability of $\{\theta_0^D>0\}$.\footnote{ Note: every model under consideration must have a positive prior mass. 
With the specification  (\ref{eq:bernD}), 
it is easy to put a positive prior mass on the model of no ambiguity 
because $\theta_0^D\leq0 \Leftrightarrow D(\gamma)=\gamma$.}

When there is no ambiguity,
the estimate of the $D$ function will be downwardly biased, irrespective of the estimation method, because we have to impose $D(\gamma)\leq\gamma$ constraint. 
This is a well known problem, \cite{Andrews_1999}, that arises when the parameter is on the boundary of the parameter space.
Under the Bayesian method, we can reduce the bias by putting a positive prior mass on $\{\theta_0^D\leq0\}$,
because then the posterior probability of \emph{no ambiguity} will exceed the prior probability, 
if indeed there is no ambiguity.  
For implementation of this idea see subsections 
\ref{section:illustration} -- \ref{section:discussion} and section \ref{section:montecarlo}. 

Finally, let $\theta^{u}\in[0,1)$ be the CRRA coefficient, and let  
$\theta:=(\theta_k^f,\theta_k^D,\theta^u)\in\Theta$ be the vector of model parameters where $\Theta$ denotes the parameter space. 

\subsection{Empirical Environment and the Likelihood}
We observe a sample of bid data from $T_n$ auctions with $n\in N:=\{\underline{n},\ldots,\overline{n}\}$ bidders in each auction. 
Let $z$ represent the entire sample, i.e., 
$z:=\{(b_{1,n,t_n},\ldots,b_{n,n,t_n})_{t_n=1}^{T_n}\}_{n\in N}$ such that total sample size is $|z|=\sum_{n\in N} nT_n$.
We assume that  for every $t_n=1,\ldots,T_n$ and every $n\in N$
$$
v_{1,n,t_n},\ldots,v_{n,n,t_n}\stackrel{iid}{\sim} F_0(\cdot)
$$ 
and the bids are equilibrium outcomes so that $b_{i_n,n,t_n} = \beta_n(v_{i_n,n,t_n}, F_{0}(\cdot))$. 
Following Assumption \ref{ass:4} we note that $F_{0}(\cdot)$ does not depend on $n$.
Since the values are independent across auctions and bidders, the bids are also independent across all 
auctions and 
bidders in the sample. 
 
Let $\beta_n(\cdot|\theta)$ be the equilibrium bidding strategy and $\beta_n'(\cdot|\theta)$ its derivative, where $\theta$ is a parameter, and let $\bar{b}_n(\theta):=\beta_n(1|\theta)$ be the highest bid.
The joint density of the data can be written as 
\begin{eqnarray}
p^*(z|\theta) = \prod_{n\in N} \prod_{t_n=1}^{T_n} \prod_{i_n=1}^n 
f[\beta_n^{-1}(b_{i_n,n,t_n}|\theta)|\theta]\frac{\mathds{1}[b_{i_n,n,t_n}\leq \bar{b}_n(\theta)]}{\beta_n'[\beta_n^{-1}(b_{i_n,n,t_n}|\theta)|\theta]}.\label{eq:pstar}
\end{eqnarray}
Since there is no closed form expression for the likelihood (\ref{eq:pstar}), the inverse bidding function and its derivative have to be numerically approximated at every observed bid in $z$, which can be time consuming especially when $|z|$ is large. 
To circumvent this we follow \cite{KimDH_2014_BSL} and discretize the bid space and use the associated multinomial likelihood.\footnote{
\cite{KimDH_2014_BSL} developed a Bayesian method with a simulated likliehood, which does not have simulation errors. 
}

To develop the multinomial likelihood we need to introduce some new notations.
Let $B_n\subset[0,1]$ include all bids $\{b_{i_n,n,t_n}\}$ for a given $n$, and let $\{[b_{d_n-1}^*,b_{d_n}^*]\}_{d_n=1}^{D_n}$ denote the sequence of bins such that  $B_n = \cup_{d_n=1}^{D_n} [b_{d_n-1}^*,b_{d_n}^*]$. 
Let $v_{d_n}^*:=\beta_n^{-1}(b_{d_n}^*|\theta)$ be the inverse bid for all knot points in $(b_1^*,\ldots,b_{D_n}^*)$.
The bin probability is then given by 
\begin{eqnarray*}
\pi_{d_n}(\theta) = \Pr( b\in [b_{d_n-1}^*,b_{d_n}^*]|\theta)
=
\Pr( v\in [v_{d_n-1}^*,v_{d_n}^*]|\theta)=
\int_{v_{d_n-1}^*}^{v_{d_n}^*} f(v|\theta) dv.
\end{eqnarray*}
Since $\beta_n(\cdot|\theta)$ is strictly increasing, we can determine $(v_1^*,\ldots,v_{D_n}^*)$ using the piecewise cubic Hermite interpolating polynomial method
and evaluate $\pi_{d_n}(\theta)$ at the knot points $(v_1^*,\ldots,v_{D_n}^*)$ with ease because $f(v|\theta)$ is a mixture of the Beta densities.

In addition, let $y_{d_n}:= \sum_{t_n=1}^{T_n} \sum_{i_n=1}^n \mathds{1}(b_{i_n,n,t_n}\in [b_{d_n-1}^*,b_{d_n}^*])$ be the number of bids in $[b_{d_n-1}^*,b_{d_n}^*]$ for $d_n\in\{1,\ldots,D_n\}, n\in N$. 
The associated sample histogram for each $n\in N$ is then $\boldsymbol{y}_n:=(y_1,\ldots,y_{D_n})$, which can be viewed as a nonparametric estimate of the bid density, up to a normalization.  
The joint probability mass of $\boldsymbol{Y}:=\{\boldsymbol{y}_n\}_{n\in N}$ is then given as
\begin{eqnarray}
p(\boldsymbol{Y}|\theta) \propto \prod_{n\in N} \prod_{d_n=1}^{D_n} \left\{\pi_{d_n}(\theta)\right\}^{y_{d_n}}.
\end{eqnarray}
We use the likelihood to draw random parameters from the posterior 
$$
\theta^{(1)},\ldots,\theta^{(S)}\sim p(\theta|\boldsymbol{Y}) \propto p(\theta) p(\boldsymbol{Y}|\theta)
$$
with a prior density function $p(\theta)$ over $\Theta$, using 
a Markov Chain Monte Carlo (MCMC) method such as the Gaussian Metropolis-Hastings algorithm.
  
\subsection{Illustration}\label{section:illustration}
In this subsection, we explain the implementation of the method using a simulated bid sample.
We first outline the data generating process (DGP), 
describe the prior distribution, and  we provide a detailed steps to compute the posterior and use the posterior 
for inference and decision making.

\subsubsection{Simulated Data}
The valuation density $f^0(\cdot)$ in this subsection is a mixture of the uniform density on $[0,1]$ and Beta densities with parameters (2,4) with mixing weights of $0.2$ and  $0.8$, respectively. The density $f^0(\cdot)$ is not nested in the BPD in (\ref{eq:bernPDF}).
We use the superscript $0$ to denote the true parameter.
The DGP we use is presented in Figure \ref{fig:mc_dgp}: Panel (a) shows $f^0(\cdot)$, panel (b) shows the $D$-function (solid line), and panel (c) shows the CRRA utility function with $\theta_u^0 = 0.3$ (solid).  
 The dashed lines in panels (b) and (c) are the $45^\circ$-lines that represent ambiguity and risk neutrality, respectively. 
The triplet $(f^0,D^0,\theta_0^u)$ collects the model primitives. 
\begin{figure}[t!]
\caption{\footnotesize Data Generating Process and Revenue Functions}\label{fig:mc_dgp}
\begin{center}
\begin{tabular}{c}
\includegraphics[width=5in]{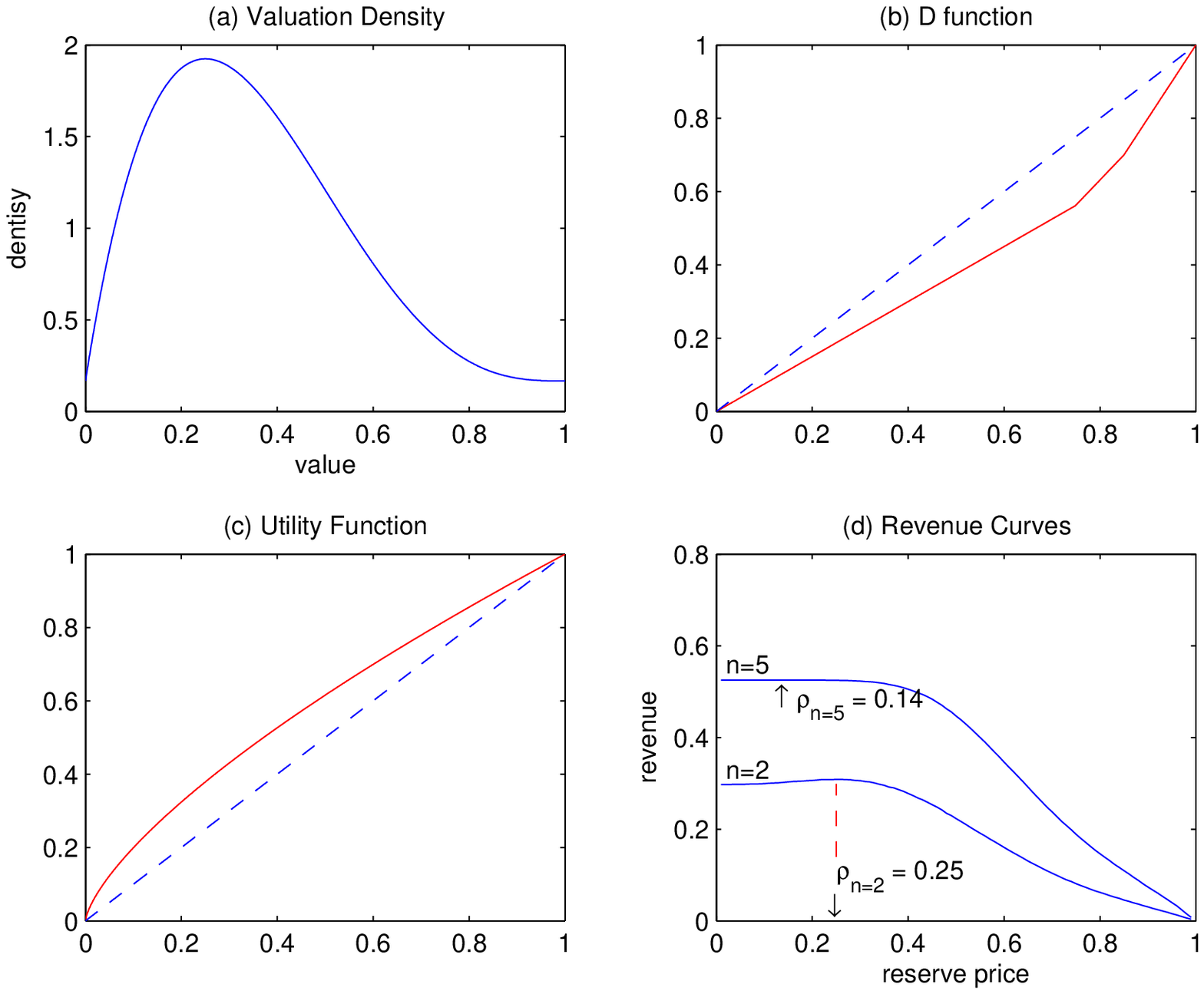}
\end{tabular}
\caption*{\footnotesize Panal (a) shows the valuation density, panel (b) plots the $D$ function in solid line and the $45^\circ$-line in dashed line. Panel (c) shows the CRRA utility function ($\theta^u=0.3$) with the $45^\circ$-line. Finally, panel (d) demonstrates the seller's expected revenues as a function of reserve price for $n\in\{2,5\}$ bidder auctions. }\end{center}
\end{figure}
Panel (d) represents the seller's expected revenue, $\Pi_n^0(\rho)$, as a function of reserve price, $\rho$.

We consider auctions with $N=\{2,5\}$. 
Let $\rho_n^0:= \arg\max_\rho \Pi_n^0(\rho)$ denote the  revenue maximizing reserve price (henceforth, RMRP).
The RMRPs 
are $\rho_{n=2}^0=0.25$ and $\rho_{n=5}^0=0.14$ and the corresponding (maximized) revenues are $0.309$ and $0.524$, respectively.  
The RMRP $\rho_n^0$ depends on $n$ unless bidders are both risk and ambiguity neutral.
Choosing the right RMRP is more important than using zero reserve price when $n=2$ than when $n=5$, because $\Pi_{n=2}^0(\rho_{n=2}^0)$ is $3.73\%$ more than $\Pi_{n=2}^0(0)$ while, because of competition, $\Pi_{n=5}^0(\rho_{n=5}^0)\approx \Pi_{n=5}^0(0)$.
\begin{figure}[t!]
\caption{\footnotesize Posterior Predictive Analysis}\label{fig:prior_pred}
\begin{center}
\begin{tabular}{c}
\includegraphics[width=5in]{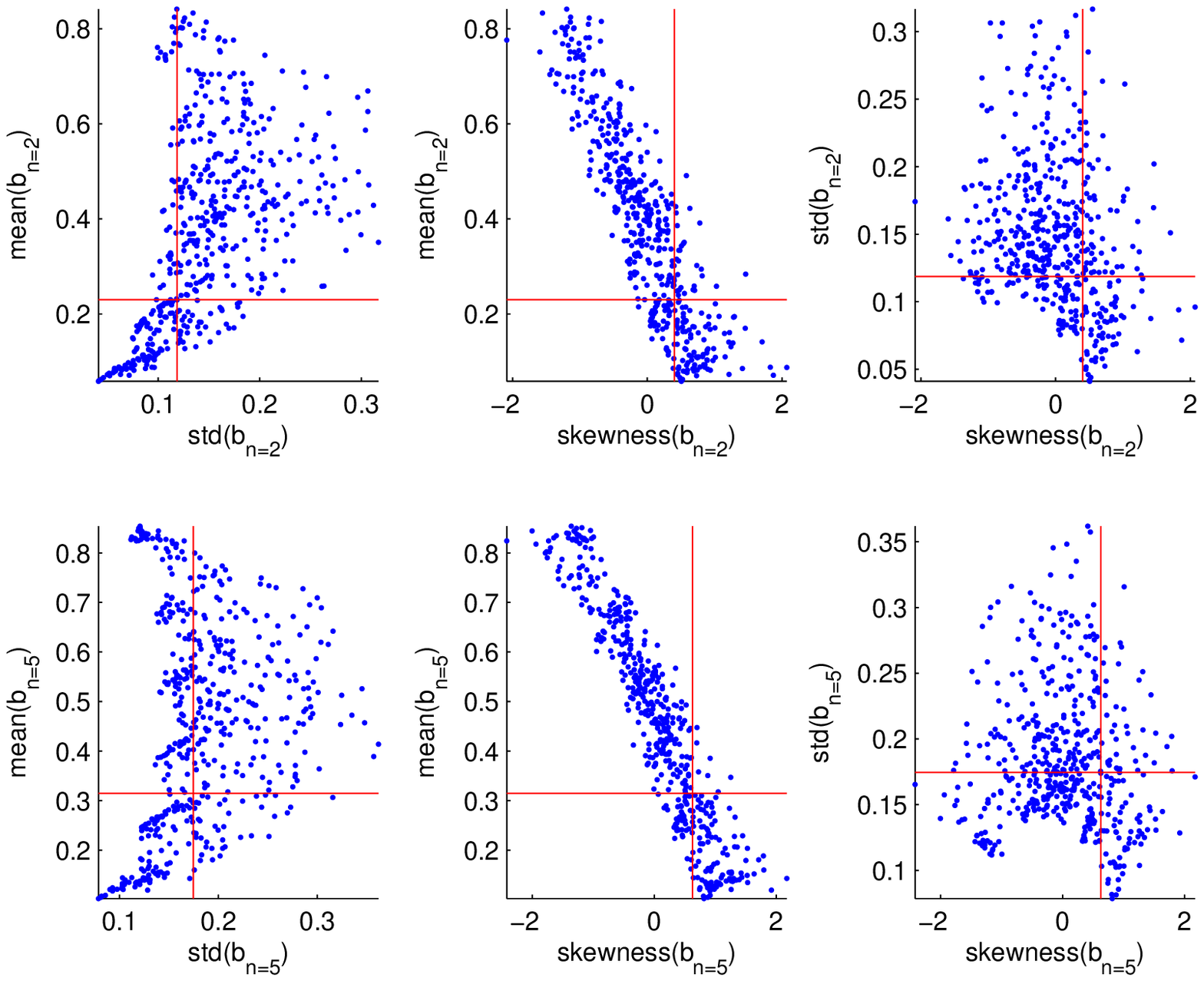}
\end{tabular}
\caption*{\footnotesize Each panel demonstrate the distribution of the summary statistics by dots of the bid data under the prior, and the summary statistics (sample mean, standard deviation, and skewness) of the original data in solid lines. When $n=2, n=5$ the statistics are $(0.23, 0.12,0.40)$ and $(0.32, 0.18, 0.62)$, respectively.}\end{center}
\end{figure}
From this DGP we draw 300 bids for each auction with $n$ bidders, so $(T_{n=2},T_{n=5})=(150,60)$,  
with 600 total bids.
Let $z_1$ denote this simulated data.\footnote{ We index the data by 1 because it is the first dataset in the Monte Carlo experiment, and we will have more later.} 
In Figure \ref{fig:prior_pred} we present the summary statistics (sample mean, standard deviation, and skewness) of the sample by the solid lines.

\subsubsection{Prior Specification}
The econometrician should choose a prior distribution to reflect his beliefs and uncertainty about $\theta$. 
In this section, however, since we know the DGP, our prior beliefs would be a degenerate distribution that approximates the DGP. 
Using such a strong prior would prevent us from effectively examining the performance of our method, so we choose a prior distribution that is fairly diffuse and relatively easy to specify and evaluate.
 We assume that $\theta_k^f$,  $\theta_k^D$, and $\theta^u$ are jointly independent under the prior: 
\begin{eqnarray}
p(\theta) = p(\theta^u) p(\theta_k^f)p(\theta_k^D). \label{eq:prior0}
\end{eqnarray}
We adopt the prior independence only for convenience, but 
the posterior would coherently update the inter-dependency as suggested by the data $z_1$. 
Now, we can specify each component on the RHS of (\ref{eq:prior0}).
First, we use the uniform prior on $[0,0.9]$ for $\theta^u$ 
by which we will rule out unreasonably strong risk aversion and 
avoid numerical errors that arise when $\theta^u$ is too close to 1. 
Second, we use the Dirichlet process prior for $\theta_k^f$, i.e., 
\begin{eqnarray*}
p\left(\theta_{1,k}^f,\ldots,\theta_{k,k}^f\right) \propto \prod_{j=1}^k \left(\theta_{j,k}^f\right)^{a_0^f a_{j,k}^f},
\end{eqnarray*}
where $a_0^f>0$ and $(a_{1,k}^f,\ldots,a_{k,k}^f)\in\Delta_{k-1}$. 
This form of prior has been widely used in nonparametric Bayesian analysis with $k$ being a random parameter with full support over $\mathds{N}$. 
Here, $a_{j,k}^f$ represents the prior belief on the probability that $v\in\left[\frac{j-1}{k-1},\frac{j}{k-1}\right]$ when $v\sim f(v|\theta_k^f)$, the BPD in (\ref{eq:bernPDF}), and $a_0^f$ represents the strength of this belief.
For more formal treatment see \cite{Ferguson_1973, Escobar_West_1995,Petrone_1999b, Petrone_1999a}.
We set $a_0^fa_{j,k}^f = 0.1$ for all $j\in\{1,\ldots,k\}$, which is a weak belief on the uniform distribution. 
Third, we construct the prior for $\theta_k^D$ as: 
\begin{eqnarray*}
p\left(\theta_k^D\right) &\propto 
\prod_{j=2}^{k-1} \left(\theta_{j,k}^D\right)^{a_0^D a_{j,k}^D}
\mathds{1}\left(D(\gamma|\theta_k^D) >0\right)\times
\mathds{1}\left(D'(\gamma|\theta_k^D) >0\right)
\mathds{1}\left(\theta_0^D\in[-0.05,0.55]\right),\label{eq:priorkk}
\end{eqnarray*}
where $a_0^D>0, (a_{2,k}^D,\ldots,a_{k-1,k}^D)\in\Delta_{k-3}$ and set $a_0^Da_{j,k}^D = 0.1$ for all $j\in\{2,\cdot,k-1\}$. 
The first two indicators impose the sign and shape restrictions on the $D$- function that it be positive and strictly increasing so that $F^*(\cdot)$ is always a valid CDF. 
The last indicator allows the smallest value for $\theta_0^D$ to be $-0.05$, which is related to the prior beliefs for ambiguity neutrality. 
But, the upper bound $0.55$ is sufficiently large so that it does not impose any restriction on the shape of $D$-function.
Finally, we set $k=6$.\footnote{ 
We could use different smoothing parameters for the valuation density and the $D$-function, but we use the same $k$ for both only for computational convenience.
In addition, we could formally choose $k$ using the Bayesian model selection or allow $k$ to be random (Bayesian nonparametric analysis), 
but we choose $k$ because it seems sufficiently flexible for all exercises in this paper and yet its computation cost is reasonable in the Monte Carlo experiments where we implement the method many times.
\cite{Aryal_Kim_2013, KimDH_2013_IJIO, KimDH_2014_BSL} chose $k$ formally 
and \cite{Petrone_1999b, Petrone_1999a} treated $k$ as a random parameter.
}

Before computing the posterior it is useful to check the information content in the prior and the model about the data by a prior predictive analysis \citep*{Geweke_2005_book}.
We draw $\theta$ from the prior and use it to generate a bid sample of size equal as $z_1$, and calculate the same summary statistics (sample mean, standard deviation and skewness) as before. 
We repeat this exercise five hundred times and in Figure \ref{fig:prior_pred} present the scatter plots of these statistics to visualize the implications of the prior. 
The fact that the points are 
scattered around 
the statistics of $z_1$ suggests that the chosen prior is diffuse and the data $z_1$ can be rationalized by the prior (the intersection of the red lines are contained in the support of the prior).  
We find that the prior probability of ambiguity neutrality is about 26\%.

\subsubsection{Posterior Computation}
In order to explore the posterior distribution, we employ the Adaptive Metropolis (henceforth, AM) algorithm of \cite{Haario_Saksman_Tamminen_2001},
which is a (slight) variation of the Gaussian Metropolis-Hastings (henceforth, GMH) algorithm.  

Let $\theta^{(s)}$ be the $s^{\rm th}$ draw from the algorithm and $\Omega$ be a covariance matrix of appropriate dimension that confirms with $\theta$.  
Under the GMH algorithm, 
we draw a candidate $\tilde{\theta}$ from $N(\theta^{(s)},\Omega)$ and 
define $\theta^{(s+1)}:=\tilde{\theta}$ with probability 
\begin{eqnarray}
\min\left\{1,\frac{p(\tilde{\theta})p(\boldsymbol{Y}|\tilde{\theta})}{p(\theta^{(s)})p(\boldsymbol{Y}|\theta^{(s)})}\right\}
\end{eqnarray}
and $\theta^{(s+1)}:=\theta^{(s)}$ with the remaining probability.
Since $N(\theta^{(s)},\Omega)$ has a full support on the Euclidean space, from Theorem 4.5.5 in \cite{Geweke_2005_book} we know that irrespective of the initial point $\theta^{(0)}$
for any measurable function $h(\cdot)$, as $S\rightarrow\infty$,
$$
\frac{1}{S}\sum_{s=1}^S h(\theta^{(s)}) \stackrel{a.s}{\longrightarrow}  E[h(\theta)|\boldsymbol{Y}]= \int h(\theta)p(\theta|\boldsymbol{Y}) d\theta.
$$
For example, $h(\cdot)$ can be the valuation density (\ref{eq:bernPDF}) or the $D$-function. 
In practice, the performance of the GMH algorithm, however, depends on the choice of the scale parameter $\Omega$. 
If $\Omega$ is too small, $\tilde{\theta}$ will be very close to $\theta^{(s)}$ and 
the GMH algorithm would not effectively explore the parameter space $\Theta$, and if $\Omega$ is too large, the proposal function often generates candidates $\tilde{\theta}$ that is unlikely under the posterior and would most likely be rejected.
If $\theta$ is a low dimensional vector, it is possible to choose an appropriate $\Omega$, but not so if it is a high dimensional vector. 

To address this problem we employ the AM algorithm, which automatically tunes $\Omega$ using the history of $\theta^{(1)},\ldots,\theta^{(s-1)}$ at each $s^{\rm th}$ step. 
Specifically, \cite{Haario_Saksman_Tamminen_2001} suggested using 
\begin{eqnarray}
\Omega_s = \left\{\begin{array}{ll}
\Omega_0 & \textrm{if} \quad\!\!s \leq s_0 \\
c(|\theta|)\textrm{cov}(\theta^{(0)},\theta^{(1)},\ldots,\theta^{(s-1)}) + c(|\theta|) \varepsilon I_{|\theta|}& \textrm{if} \quad\!\!s > s_0, 
\end{array}
\right.\label{eq:am}
\end{eqnarray}
where $c(|\theta|)$ is a constant that depends on $|\theta|$, the dimension of $\theta$,   
$\Omega_0$ is an initial covariance matrix, $\varepsilon$ is a small positive constant,
and $I_{|\theta|}$ is the identity matrix.
The AM algorithm, which uses $\Omega_s$ instead of $\Omega$, 
converges to the posterior if the posterior is bounded from above and has a bounded support.
Both conditions are satisfied in our case 
because the prior has bounded support and 
the multinomial likelihood is bounded from above.  

Like \cite{Haario_Saksman_Tamminen_2001}, we use $c(|\theta|)=2.4/(2k-2),\Omega_0 = 0.001 I_{|\theta|}$, $s_0= 100$ and $\varepsilon = 0.0001$.\footnote { Small $\Omega_{0}$ ensures that the algorithm accepts some candidates at early steps and, therefore, the early history of $\theta^{(1)},\ldots,\theta^{(s_0)}$ before updating $\Omega_s$ is not degenerate.} 
Then, we draw the parameters from the posterior distribution using the AM algorithm, and to reduce autocorrelation across draws we record only every $100^{\rm th}$ outcomes. 
To check the convergence of the parameter draws, we use the separated partial means test in \cite{Geweke_2005_book}, section 4.7. 
The idea of the test is as follows:
Suppose we have a sample $\{\theta^{(s)}; s=1, \ldots, S\}$ drawn from a fixed distribution and 
divide the sample into four equal blocks.
Then the null hypothesis must be true that the mean of   
second block $\{\theta^{(s)}; s=S/4+1, \ldots, S/2\}$ is equal to the mean of the fourth block 
$\{\theta^{(s)}; s= 3S/4+1,\ldots,S\}$.
We test the null for each component of $\theta$, so we have $|\theta|$ many $p$-values, and  
terminate the algorithm when the smallest $p$-value exceeds 0.01.

We run the test at the $200,000^{\rm th}$ iteration for the first time. 
If some $p$-values are smaller than 0.01, we additionally iterate the AM algorithm 10,000 times and again check the convergence. 
We continue this until the algorithm stops. 
Therefore the final $S$ is random.
We use the last seventy five percent of the iterations, $\{\boldsymbol{\theta}^{(s)}; s=S/4+1,\ldots, S\}$,  
for inference and decision making. 
The test ensures that these parameters are drawn from the posterior and can therefore be used for policy analysis. 
Since our stopping criteria requires the worst case to pass the test, this decision rule is conservative. 

In our exercise with the data $z_1$, the smallest and the average $p$-value we record for convergence are 0.19 and $0.64$, respectively, at the $200,000^{\rm th}$ iteration. 
\begin{figure}[t!]
\caption{\footnotesize Parameter Draws from the AM algorithm}\label{fig:mcmc_trace}
\begin{center}
\begin{tabular}{c}
\includegraphics[width=5in]{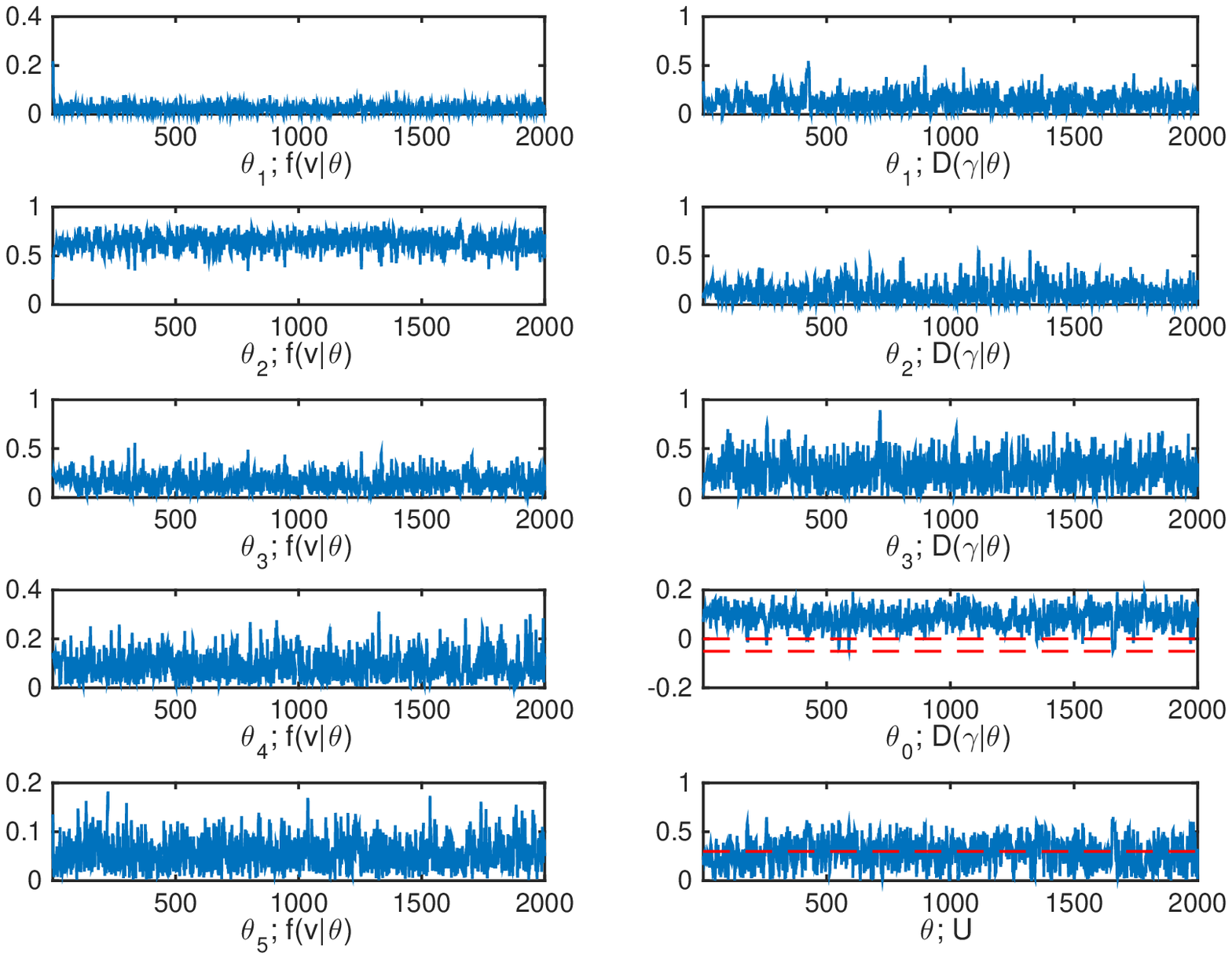}
\end{tabular}
\caption*{\footnotesize 
The left panels show the parameters of valuation density, and the right panels show the parameters for $D$-function and the utility function (bottom).  
}
\end{center}
\end{figure}
See Figure \ref{fig:mcmc_trace} for the times series of outcomes. 
The X-axis is the length of the series, which is 2,000 because we record every $100^{\rm th}$ outcome.  
The left panels show the parameters of valuation density, and the right panels show the parameters for $D$-function and the utility function (bottom).  
The two dashed horizontal lines in panel for $\theta_0^D$, (fourth from top) indicate the negative range of $\theta_0^D$ -- recall that $\theta_{0}^{D}$ can be negative in which case the $D$-function is the identity, i.e. no ambiguity aversion. 
The red dashed line in the panel for $\theta^u$ is the true CRRA coefficient which is set at $0.3$.  

\subsubsection{Posterior Analysis and Decision Making}\label{section:334}
We begin with a posterior predictive analysis, just like the prior predictive analysis. 
 For each $\theta^{(s)}$, 
 drawn from the posterior, 
 we generate a bid sample of size $|z_1|$ and compute the same summary statistics: sample mean, standard deviation, and skewness.
The results are presented in Figure \ref{fig:posterior_pred}, and as can be seen, the posterior distribution accurately predicts the summary statistics of the actual data $z_1$ very precisely.   
\begin{figure}[t!]
\caption{\footnotesize Posterior Predictive Analysis}\label{fig:posterior_pred}
\begin{center}
\begin{tabular}{c}
\includegraphics[width=5in]{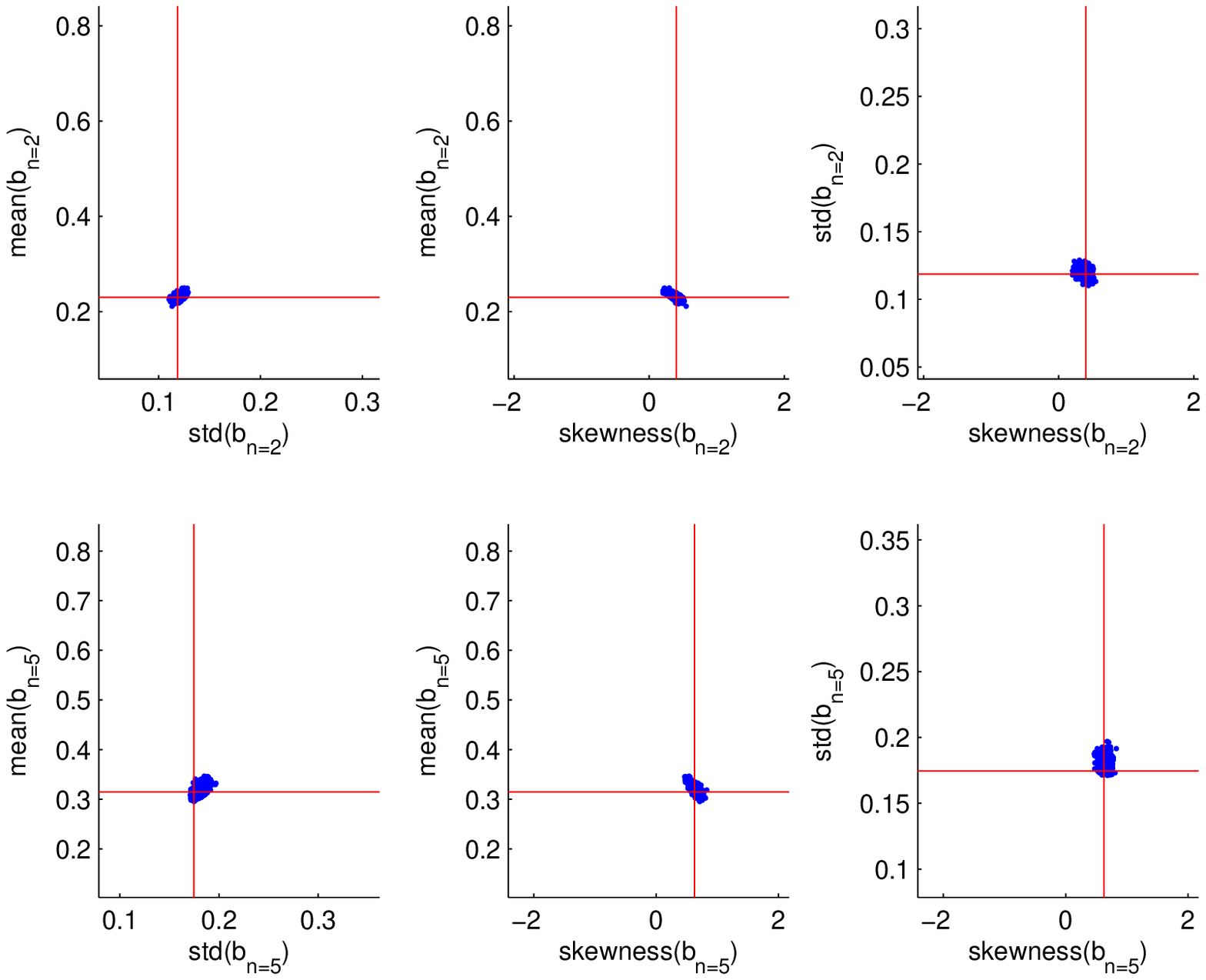}
\end{tabular}
\caption*{\footnotesize Each panel demonstrate the distribution of the summary statistics by dots of the bid data under the posterior along with the summary statistics of the original data in solid lines. Note that the ranges for each panel and the solid lines are the same as the ones in Figure \ref{fig:prior_pred}.}
\end{center}
\end{figure}

Before discussing the posterior analysis further, 
it would be useful to make a formal distinction between the concepts of accuracy and precision, which are often confused, though widely used.
An estimate is said to be \emph{accurate,} when it is in a small neighborhood of the true quantity. 
Since we know the DGP in this section, 
we can measure the accuracy by computing the $L_2$-distance. 
On the other hand, 
the estimate is \emph{precise,}
if there is little uncertainty around the estimate where the concepts of uncertainty further depends on the philosophical views on statistics. 
In Bayesian statistics, the parameter is random whereas data are fixed.
The posterior captures the parameter uncertainty conditional on the fixed data, and 
the posterior credible sets and/or the posterior standard deviation are often reported as a measure of uncertainty.
In contrast, in a frequentist analysis, the parameter is fixed, but 
there is uncertainty about the estimate because data are random.
This kind of uncertainty is quantified by the sampling distribution of the estimator, which is often measured by an asymptotic standard errors or confidence sets.
The estimate is, therefore, precise when the posterior (sampling) distribution is condensed from the Bayesian (frequentist) point of view. 
In this section, we use  the Bayesian precision, but we examine, in section \ref{section:montecarlo}, the frequentist uncertainty by repeated sampling, $\{z_m\}_{m=1}^M$.

The posterior predictive valuation density is given by the most widely used Bayesian density estimate
\begin{eqnarray}
\widehat{f}(v|\boldsymbol{Y}):= \frac{1}{S}\sum_{s=1}^S f(v|\theta^{(s)})
\stackrel{a.s}{\longrightarrow} E[ f(v|\theta)|\boldsymbol{Y}] \label{eq:predf}
\end{eqnarray}
as $S\rightarrow\infty$ for $v\in[0,1]$.
Figure \ref{fig:estimate01} (a) shows the estimate $\widehat{f}(v|\boldsymbol{Y})$ with its point-wise 2.5 and 97.5 percentiles posterior credible band in dashed lines and 
the true density $f^0(\cdot)$ in a solid line.  
More specifically, recall that we use 1,500 parameters drawn from the posterior, which means we have 1,500 valuation densities. 
For every point $v\in[0,1]$, 
the middle dashed line represents the average of these 1,500 densities, i.e, (\ref{eq:predf}), and 95\% of the densities pass between the upper and the lower dashed lines. 
The 95\% credible band is narrow, which means the posterior inference on the valuation density is precise.
Moreover, the narrow credible band contains $f^0(\cdot)$ and $\widehat{f}(v|\boldsymbol{Y})\approx f^0(v)$ over the entire support $[0,1]$; the estimate is accurate.
Furthermore, 
because we know $f^0(\cdot)$,
we can measure the accuracy by the $L_{2}$-distance between estimate and the true density:
 $$d[\widehat{f}(\cdot|\boldsymbol{Y}),f^0(\cdot)] = \left\{\int [\widehat{f}(x|\boldsymbol{Y})-f^0(x)]^2 dx\right\}^{1/2}=0.069.$$ 
\begin{figure}[t!]
\caption{\footnotesize Posterior of Correct Model}\label{fig:estimate01}
\begin{center}
\begin{tabular}{c}
\includegraphics[width=5in]{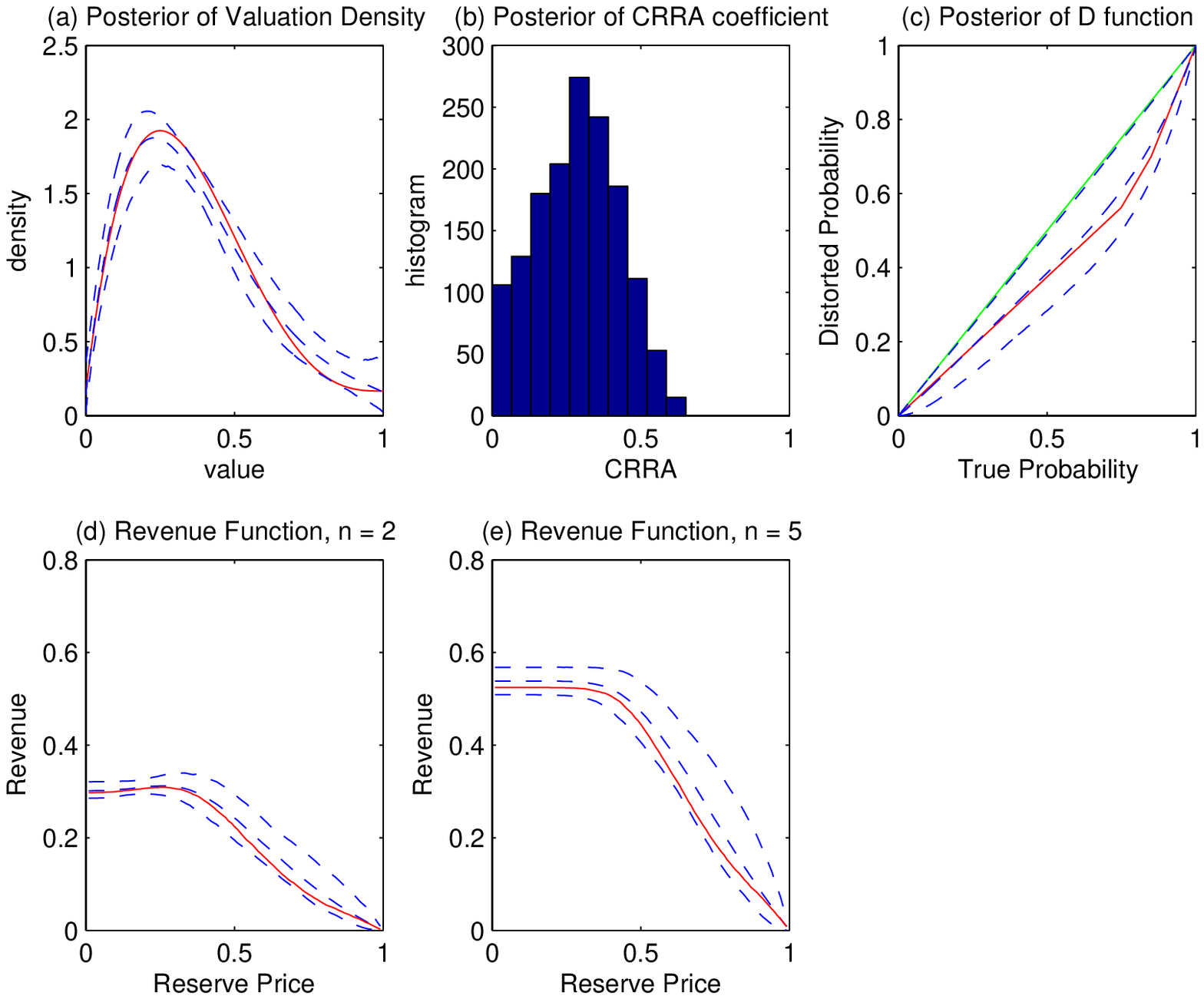}
\end{tabular}
\caption*{\footnotesize 
Panel (a) shows the posterior of the valuation density by its point-wise mean and a 95\% credible band. Panel (b) is the posterior of the CRRA coefficients. Panel (c) summarizes the posterior of the $D$-function. Panels (d) and (e) show the posterior of revenue functions for $n=2$ and $n=5$ cases. On panels (a), (c), (d), and (e), the true quantities are the solid line. (Panel (c) shows the identity.) 
}
\end{center}
\end{figure}
Figure \ref{fig:estimate01} (b) is the histogram of $\{\theta^{u,(s)}\}$ (CRRA coefficient) drawn from the posterior distribution. 
Define the Bayesian estimate for the CRRA coefficient as 
$$
\widehat{\theta}^u:=\frac{1}{S}\sum_{s=1}^S \theta^{u,(s)} \stackrel{a.s}{\longrightarrow} E[ \theta^u|\boldsymbol{Y}] \textrm{ as }S\rightarrow\infty, 
$$
which is the posterior mean of $\theta^u$. 
We obtain $\widehat{\theta}^u=0.29 (\approx \theta_0^u=0.3)$ with the posterior standard deviation of 0.14.
The posterior predictive $D(\gamma)$ for $\gamma\in[0,1]$ is given by 
$$
\widehat{D}(\gamma|\boldsymbol{Y}):= \frac{1}{S}\sum_{s=1}^S D(\gamma|\theta^{(s)})
\stackrel{a.s}{\longrightarrow} E[ D(\gamma|\theta)|\boldsymbol{Y}], \textrm{ as }S\rightarrow\infty.
$$
Figure \ref{fig:estimate01} (c) shows $\widehat{D}(\cdot|\boldsymbol{Y})$ and its pointwise 2.5 and 97.5 posterior percentiles (dashed line). 
It appears that the credible band contains $D^0(\cdot)$ (solid line), and 
$d[\widehat{D}(\cdot|\boldsymbol{Y}),D^0(\cdot)]=0.018$, which suggests the high accuracy of our estimate. 
The posterior probability that bidders are ambiguity neutral is estimated by 
\begin{eqnarray}
\frac{1}{S}\sum_{s=1}^S \mathds{1}\left[\theta_0^{D,(s)}<0\right] 
\stackrel{a.s}{\longrightarrow} E\left[\theta_0^{D,(s)}<0\Big|\boldsymbol{Y}\right],
 \textrm{ as }S\rightarrow\infty.\label{eq:probambi}
\end{eqnarray}
We find that the posterior probability is only 2.13\%, thus providing a strong evidence of ambiguity aversion.  

Next, we consider the decision problem of choosing a reserve price $\rho$ to maximize the seller's expected revenue. 
Let $\Pi_n(\theta,\rho)$ denote the seller's expected revenue at $\rho$ under $\theta\in\Theta$ in a first price auction with $n$ bidders. 
Then, the posterior predictive revenue is given as 
\begin{eqnarray}
E[\Pi_n(\theta,\rho)|\boldsymbol{Y}] = \int_{\Theta} \Pi_n(\theta,\rho) p(\theta|\boldsymbol{Y}) d\theta.\label{eq:predrev}
\end{eqnarray}
The subjective expected utility theory \cite{Savage_1954, Anscombe_Aumann_1963} postulates that  
it is rational to maximize (\ref{eq:predrev}). 
Let 
$
\rho_n^B:= \arg\max_{\rho} E[\Pi_n(\theta,\rho)|\boldsymbol{Y}], 
$
which is called the Bayes action.\footnote{ The solution $\rho_n^B$ is also optimal under the average risk principle, a widely used \emph{frequentist} decision criteria; see  \cite{Berger_1985, KimDH_2013_IJIO, KimDH_2014_BSL}.
Moreover, we note that estimation problem is a special case of decision making problem where the posterior mean of the parameter is the Bayes action with respect to the squared error loss. 
Therefore, the Bayesian estimates are decision theoretically optimal. 
} 
In order to choose $\rho_n^B$, we estimate (\ref{eq:predrev}) by 
\begin{eqnarray}
\widehat{\Pi}_n(\rho) := \frac{1}{S}\sum_{s=1}^S \Pi_n(\theta^{(s)},\rho)\label{eq:predrev1},
\end{eqnarray}
which is shown in Figure \ref{fig:estimate01} (d) for $n=2$ ( and Figure \ref{fig:estimate01} (e) for $n=5$) along with a 95\% posterior credible band (dashed line). 
The $1^{\rm st}$ line of Table \ref{table:poster_rev} shows that $\rho_{n=2}^B = 0.26$ 
at which the posterior predicts the revenue $\widehat{\Pi}_{n=2}(\rho_{n=2}^B) = 0.312$. 
Moreover, the 2.5 and 97.5 posterior percentiles of $\widehat{\Pi}_{n=2}(\theta,\rho_{n=2}^B)$ 
form a 95\% posterior credible interval $[0.296,0.329]$ for the revenue at $\rho_{n=2}^B$. 
This interval includes the true revenue, $\Pi_{n=2}^0(\rho_{n=2}^B)=0.309$, which is essentially equal to $\Pi_{n=2}^0(\rho_{n=2}^0)$ where $\rho_{n}^0=\arg\max_\rho \Pi_{n}^0(\rho)$. 
Hence, 
there is no revenue loss of using $\rho_{n=2}^B$ relative to using $\rho_{n=2}^0$.
The $4^{\rm th}$ line of Table \ref{table:poster_rev} summarizes the policy implications for $n=5$.

\subsection{Discussion}\label{section:discussion}
Before we conclude this section, we discuss why we employ the direct approach in the Bayesian framework instead of adopting the indirect approach that has been used since \cite{Guerre_Perrigne_Vuong_2000} -- 
the latter first estimates the bid distribution functions and recovers the primitives from the estimates by the first order conditions. 
First, 
since it is relatively straightforward to impose shape restrictions under the direct approach, 
we may easily develop an empirical framework 
where the econometric method is internally consistent with the underlying economic model. 
For example, 
the monotonicity of bidding functions is automatically satisfied under the direct approach, 
but the inverse bidding function associated with the estimated bid distribution functions 
(indirect approach) may not be monotone unless explicitly imposed. 
Such a violation of shape conditions may lower efficiency  
because the available information is not fully exploited,
and it would also invalidate policy recommendations 
because counterfactual analysis under an alternative policy should be valid 
only when the model assumption(s), like bidding monotonicity, are satisfied,
see \cite{KimDH_2014_BSL}.

Second, today, computing is far more powerful than that of a few decades ago and it is much cheaper.
By providing a computationally feasible nonparametric framework,  
\cite{Guerre_Perrigne_Vuong_2000} has widely broadened the scope of the empirical auction literature, which had, in 90's or before, relied upon tightly specified statistical models within a few very simple theoretical paradigms 
mostly because of the computational difficulties for evaluating the likelihoods. 
We no longer have such computational restrictions. 
In the next sections, we run our empirical methods in many Monte Carlo experiments using authors' desktop/laptop computers.

Once the direct approach is chosen, the Bayesian approach has several advantages over frequentist methods.
The statistical model for bid data from first price auctions is irregular because the support of bids depends on parameters of interest. 
\cite{Hirano_Porter_2003} shows that in this case
the Bayesian estimator is efficient but the maximum likelihood estimator (MLE) is not.\footnote{
The results of \cite{Hirano_Porter_2003} hold under fairly weak assumptions on loss functions and priors, including the ones we use in this paper -- the error squared loss and the expected revenue. (The negative of the revenue is the loss for our analysis.) 
}
Moreover, the Bayesian method provides a natural environment of decision theoretic framework that is useful for the seller who wishes to choose a reserve price to maximize the expected revenues; see \cite{Aryal_Kim_2013, KimDH_2013_IJIO, KimDH_2014_BSL}. 
Finally, the Bayesian method can be more useful in a case of the parameter on the boundary of the parameter space, where  
both the Bayesian estimator and the MLE are typically biased.
As mentioned earlier, by putting a positive prior mass on the subspace of the parameter space, however, we may reduce the bias of the Bayesian analysis. 
For example, even if the true $D$-function is the identity (no ambiguity), 
the empirical method that restricts $D$-function to be bounded above by the identity function 
will produce downwardly biased estimates.
We handle this problem by putting a positive prior mass on the event that $\theta_0^D<0$.
In the next section, 
we confirm that 
such a prior mass enables the posterior to predict the $D$- function to be the identity mapping when there is no ambiguity.
The cost of this is that when there is 
ambiguity, 
the posterior puts a positive albeit negligible probability on the identity.

\begin{table}[t!]
\caption{ Posterior Analysis for Seller Revenue }\label{table:poster_rev}
\begin{center}
{\footnotesize
\begin{tabular}{ll|ccccc}
\hline\hline
	&& Bayes 	   & Predictive & 95 \% Credible & True Rev. at  & Rev. Loss (\%) \\
    	&&  Action  & Revenue &  Interval for            & B. Action, $\rho_{n}^B$ & wrt Max. Rev.\\
	&& $\rho_{n}^B$ & $\widehat{\Pi}_n(\rho_{n}^B)$ & Revenue & $\Pi_n^0(\rho_{n}^B)$ & [(D)-(B)]/(B)\\
	&& (A) & (B) & (C) & (D) & $\times100\%=$ (E) \\
\hline
$n=2$	&Correct		& 0.26 	& 0.312 	& [0.296,0.329]	& 0.309 	& 0.000 \\
		&Redundant	& 0.28 	& 0.297 	& [0.283,0.310]	& 0.292 	& 0.083 \\
		&Misspecified	& 0.12 	& 0.316 	& [0.308,0.324]	& 0.301 	& 2.651 \\
		&&&&&&\\
$n=5$	&Correct		& 0.11 	& 0.538 	& [0.513,0.563]	& 0.524 	& 0.000 \\
		&Redundant	&0.12 	& 0.496 	& [0.480,0.512]	& 0.485 	& 0.000 \\
		&Misspecified	&0.10 	& 0.537 	& [0.513,0.557]	& 0.524 	& 0.000 \\
	\hline
\end{tabular}
}
\end{center}
\caption*{\footnotesize Column (A) shows the Bayes action and columns (B) and (C) summarizes the posterior distribution of the revenue at the Bayes action by the mean and a 95\% credible interval. Column (D) shows the true revenue at the Bayes action and column (E) the revenue loss of using the Bayes action relative to the true maximum revenue.}
\end{table}


\section{Monte Carlo Study}\label{section:montecarlo}
In this section, we examine the performance of our Bayesian method in a repeated sampling for three different cases: (i) Correct model -- where bidders are ambiguity averse and the econometrician allows ambiguity aversion; (ii) Redundant model -- bidders are ambiguity neutral, but the econometrician allows ambiguity aversion; and (iii) Misspecified model -- bidders are ambiguity averse but the econometrician ignores it. 
For each case, we study the sampling distributions of the Bayesian predictive estimates and quantify the effect of the model choice on seller's expected revenue.
To summarize our result: we show that our method performs well when there is ambiguity (correct) and it does still so even when there is no ambiguity (redundant). 
Especially, there is no discernible effect of over specification -- redundantly modeling ambiguity when there is none -- on the seller's revenue. 
However, if we use a misspecified model and ignore ambiguity, then it may cause a substantial revenue loss. 
We conclude this section by studying the case where we have a larger set of $N$.

\subsection{Correct Model}\label{correct}
We draw $M$ datasets $\{z_m\}_{m=1}^{M}$ independently from the DGP shown in Figure \ref{fig:mc_dgp}. 
Then, for each data realization, we apply our method in subsection \ref{section:illustration}.
This Monte Carlo study generates estimates $\{\widehat{f}_m,\widehat{D}_m,\widehat{\theta}_m^u\}_{m=1}^M$ and the Bayes actions and associated true revenues 
$\{\rho_{n,m}^B,\Pi_n^0(\rho_{n,m}^B)\}_{m=1}^M$ for $n\in\{2,5\}$.
We use $M=300$ and analyze $z_1$ in subsection \ref{section:illustration}. 

\begin{figure}[t!]
\caption{\footnotesize Monte Carlo Study for Correct Model}\label{fig:mc_correct_small}
\begin{center}
\begin{tabular}{c}
\includegraphics[width=5in]{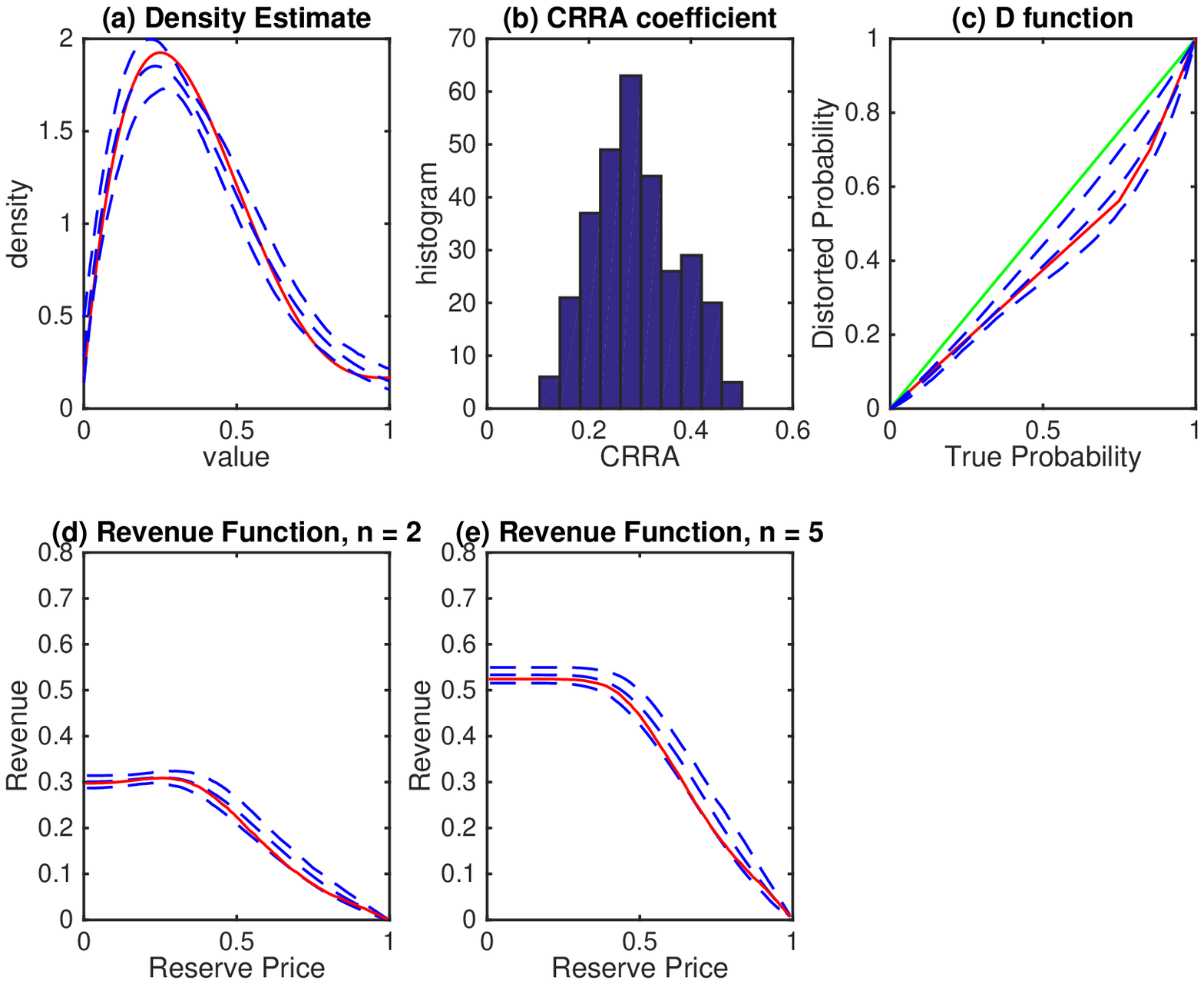}
\end{tabular}
\caption*{\footnotesize 
Panel (a) shows the sampling distribution of the estimated valuation densities by its pointwise mean and a 95\% frequency band. Panel (c) is the histogram of the CRRA estimates. 
Panel (c) demonstrates the sampling distribution of the estimated $D$ functions. 
Panels (d) and (e) are for the estimated revenue functions with alternative numbers of bidders.
The solid lines represent the true quantities. 
}
\end{center}
\end{figure}
Figure \ref{fig:mc_correct_small}(a) summarizes the sampling distribution of $\{\widehat{f}_m\}_{m=1}^{M}$ 
by their pointwise mean, and the 2.5 and 97.5 percentiles (dashed line). 
The pointwise mean closely approximates $f^0$ (solid line) and the 95\% frequency band is narrow.
As discussed in subsection \ref{section:334}, 
the sampling distribution of estimates here is different from the posterior distribution in subsection \ref{section:illustration}: 
the latter quantifies the uncertainty regarding $\theta$ for a given data $z_1$ whereas the former represents the variation of the Bayesian estimate (posterior mean) associated with the randomness of $z$. 

The sampling distribution of $\{\widehat{D}_m\}_{m=1}^{M}$ is similarly shown in panel (c).
All other curves in the panel have the same interpretation as before. 
Table \ref{table:mc01} documents that the mean integrated squared error (MISE) of $\widehat{f}$ is 0.0083 and the MISE of $\widehat{D}$ is 0.0009, which shows the high accuracy of our method.\footnote{
Let $\hat{f}_y$ be an estimate constructed by data $y$ for the true function $f_0$. 
Then, $MISE(\hat{f}) =  \int E_y\left[(\hat{f}(x) - f_0(x))^2\right] dx = \int V_y[\hat{f}(x)] dx + \int \{E_y[\hat{f}(x)] - f_0(x)\}^2 dx = \textrm{variance}^2 + \textrm{bias}^2$. The MISE is small only when the variance and the bias are both small. 
}
Panel (b) presents the histogram of $\{\widehat{\theta}_m^u\}_{m=1}^M$ -- 
the sample mean is 0.293 and the standard deviation is 0.017. 
The mean squared error (MSE) is given as $E[(\widehat{\theta}^u - \theta_0^u)^2] = 0.007$ where the expectation is taken over the sample $z$. 

Panels (d)  and (e) in Figure \ref{fig:mc_correct_small} display the sampling distributions of $\{\widehat{\Pi}_{n=2,m}\}_{m=1}^M$ and $\{\widehat{\Pi}_{n=5,m}\}_{m=1}^M$, respectively. 
Recall that $\widehat{\Pi}_n(\rho)$ denotes the posterior predictive revenue in (\ref{eq:predrev1}) and the Bayes action is $\rho_n^B:=\arg\max_\rho \widehat{\Pi}_n(\rho)$. 
Moreover, $\Pi_n^0(\rho)$ is the true revenue, unknown to the seller; see Figure \ref{fig:mc_dgp}(d), and 
$\rho_n^0:=\arg\max_\rho \Pi_n^0(\rho)$, which is infeasible. 
The seller can, therefore, choose $\rho_n^B$ and obtain the true revenue of  $\Pi_n^0(\rho_n^B)$ -- we focus on the sampling distribution of $\{\rho_{n,m}^B,\Pi_n^0(\rho_{n,m}^B)\}_{m=1}^M$.
The average of $\{\rho_{n=2,m}^B\}_{m=1}^M$ is 0.248 with standard deviation of $0.037$ and the average of $\{\Pi_{n=2}^0(\rho_{n=2,m}^B)\}_{m=1}^M$ is $0.308$ with standard deviation of $0.001$.  
Moreover, the average revenue loss of employing $\rho_{n=2}^B$ with respect to $\Pi_{n=2}^0(\rho_{n=2}^0)$ is only 0.398\%.
\begin{table}[t!]
\caption{\footnotesize Monte Carlo Study, $N=\{2,5\}$}\label{table:mc01}
\begin{center}
{\footnotesize
\begin{tabular}{lr|cccc}
\hline\hline
 & Total N. & $MISE(\widehat{f})$     &$MISE(\widehat{D})$  & $MSE(\widehat{\theta}^u)$ & Rev. Loss  (\%) \\
Specification & of bids  & (A)                   &  (B)                 & (C)              & $n=2$ (D) \\
\hline
Correct & 600  & 0.0083  & 0.0009  & 0.007  & 0.398 \\
 & 1,200  & 0.0054  & 0.0006  & 0.006  & 0.307 \\
 & 2,400  & 0.0040  & 0.0004  & 0.005  & 0.165 \\
&&&&&\\
Redundant & 600  & 0.0049  & 0.0004  & 0.007  & 0.189 \\
 & 1,200  & 0.0025  & 0.0004  & 0.005 & 0.094 \\
 & 2,400  & 0.0015  & 0.0003  & 0.004  & 0.010 \\
&&&&&\\
Misspecified & 600  & 0.0214  & 0.0128  & 0.078  & 2.898 \\
 & 1200  & 0.0230  & 0.0128  & 0.087  & 3.246 \\
 & 2400  & 0.0253  & 0.0128  & 0.092  & 3.439 \\
	\hline
\end{tabular}
}
\end{center}
\caption*{\footnotesize 
Columns (A) and (B) documents the MISEs of the valuation density estimate and the $D$ function estimate, respectively. 
Column (C) shows the MSE of the estimate for the CRRA  coefficient.
Column (D) provides the revenue loss of the Bayes auction relative to the true maximum revenue.
}
\end{table}

Finally, we consider larger samples: (i) $(T_{n=2},T_{n=5})=(300,120)$, i.e., $2T_{n=2}+5T_{n=5} = 1,200$; and (ii) $(T_{n=2},T_{n=5})=(600,240)$, i.e., $2T_{n=2}+5T_{n=5} = 2,400$. 
For each case, we repeat the Monte Carlo experiments with $M=300$ replications, as described above, and find that the estimates get more accurate and the revenue loss decreases as the sample size increases, see Table \ref{table:mc01}.

\subsection{Redundant Model}
We generate datasets $\{z_m\}_{m=1}^{M}$ independently from the DGP shown in Figure \ref{fig:mc_dgp} except that we use $D^0(\gamma)=\gamma$, i.e., the model of no ambiguity aversion. 
In other words, there is no ambiguity among bidders.  
Then, for each $z_m$, we apply our method as before that allows ambiguity aversion.

We discuss first the posterior analysis for the first dataset $z_1$ and then investigate the sampling distribution using many datasets, $\{z_m\}_{m=1}^{M}$. 
The prior and posterior predictive analyses on the summary statistics of $z_1$ produce almost identical results as Figures \ref{fig:prior_pred} and \ref{fig:posterior_pred}. 
Figure \ref{fig:estimate01_redundant} presents the posterior distributions for the quantities of interest as in Figure \ref{fig:estimate01}.
It is noticeable that $\widehat{D}(\gamma)\approx D^0(\gamma)=\gamma$ and its 95\% credible band is narrow, correctly predicting that the bidders would not be ambiguity averse.\footnote{
Since $D(\cdot|\theta)$ is restricted to be below the identity; see (\ref{eq:bernD}), the pointwise upper bound cannot be larger than $D^0$. 
}  
Moreover, we find that the posterior probability for no ambiguity aversion (\ref{eq:probambi}) is 55\% whereas the prior probability is 26\%. 
If we choose a model between ambiguity averse model and ambiguity neutral model, 
according to the Bayesian model selection, 
we would select the one with the largest posterior probability.\footnote{ 
The Bayesian model comparison is often approximated by the Bayesian information criteria or the Akaike Information criteria, each assuming a different prior.
} 
Thus, since  
the posterior odd ratio is 
\begin{eqnarray*}
\frac{\textrm{posterior} \Pr(\textrm{ no ambiguity aversion })}{\textrm{posterior} \Pr(\textrm{ ambiguity aversion })} = \frac{0.55}{0.45} > 1,
\end{eqnarray*}
we would choose the model of no ambiguity aversion. 

Figure \ref{fig:estimate01_redundant} also shows that $\widehat{f}$ and $\widehat{\Pi}_n$ closely approximate $f^0$ and $\Pi_n^0$ with narrow credible bands,
even when the redundant modeling of ambiguity aversion creates additional parameter uncertainty.
Furthermore, the redundant modeling does not invalidate the policy recommendation of our method. 
The $2^{\rm nd}$ line of Table \ref{table:poster_rev} shows that $\rho_{n=2}^B = 0.28$ 
at which the posterior predicts the revenue $\widehat{\Pi}_{n=2}(\rho_{n=2}^B) = 0.297$. 
Moreover, the 2.5 and 97.5 posterior percentiles of $\widehat{\Pi}_{n=2}(\theta,\rho_{n=2}^B)$ 
form a 95\% posterior credible interval $[0.283,0.310]$ for the revenue at $\rho_{n=2}^B$. 
This interval includes the true revenue, $\Pi_{n=2}^0(\rho_{n=2}^B)=0.290$, which is very close to $\Pi_{n=2}^0(\rho_{n=2}^0)$ -- 
the revenue loss of using $\rho_{n=2}^B$ relative to using $\rho_{n=2}^0$ is only 0.083\%.
The $5^{\rm th}$ line of Table \ref{table:poster_rev} also summarizes the policy implications on the seller's revenue for $n=5$.
Note that the true revenue function is different from the one in the previous subsection because $D$ here is the identity. 
\begin{figure}[t!]
\caption{\footnotesize Posterior of Redundant Model}\label{fig:estimate01_redundant}
\begin{center}
\begin{tabular}{c}
\includegraphics[width=5in]{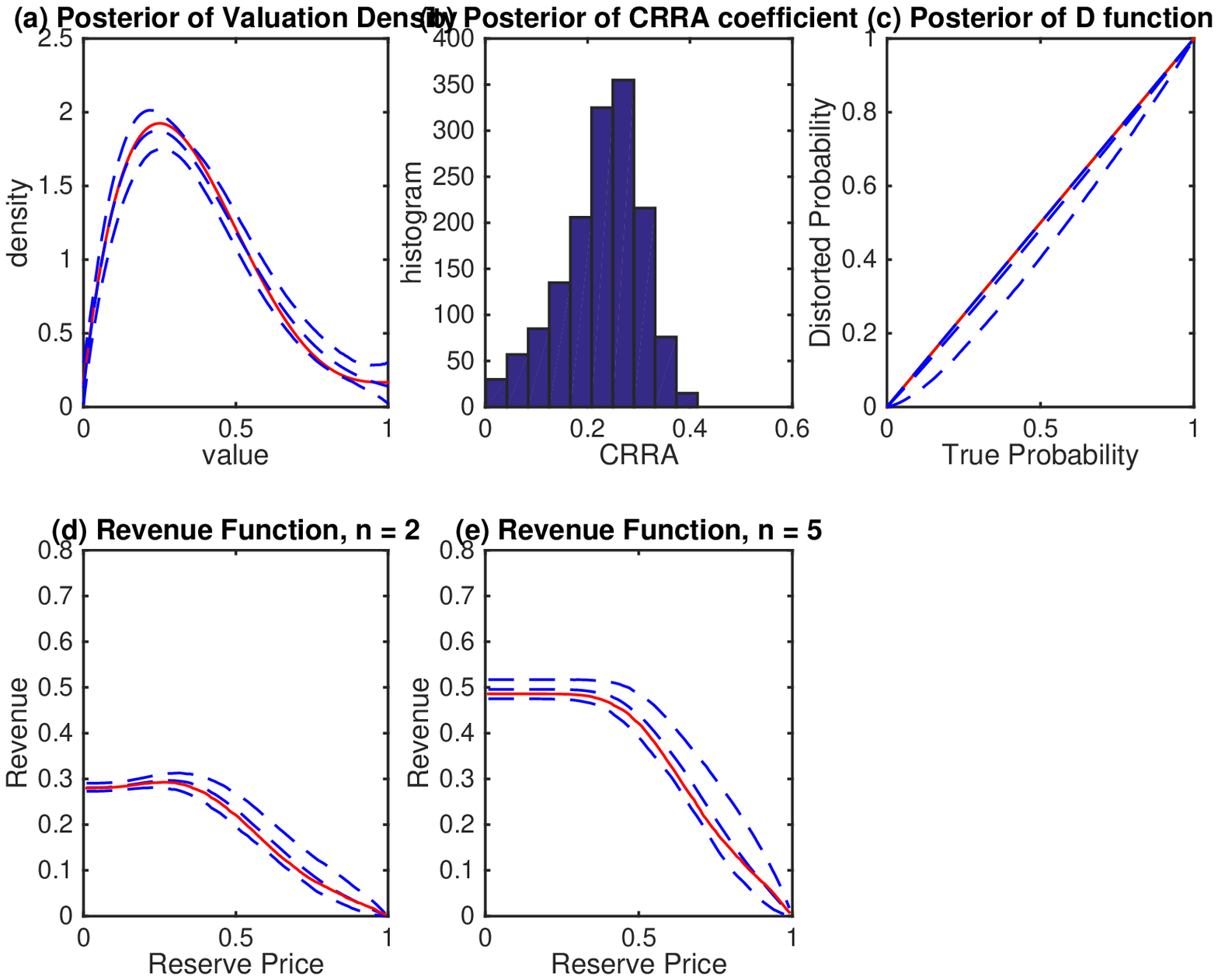}
\end{tabular}
\caption*{\footnotesize 
Panel (a) shows the posterior of the valuation density by its pointwise mean and a 95\% credible band. Panel (b) is the posterior of the CRRA coefficients. Panel (c) summarizes the posterior of the $D$-function. Panels (d) and (e) show the posterior of revenue functions for $n=2$ and $n=5$ cases. On panels (a), (c), (d), and (e), the true quantities are the solid line. (Panel (c) shows the identity.) 
}
\end{center}
\end{figure}
\begin{figure}[t!]
\caption{\footnotesize Monte Carlo Study for Redundant Model}\label{fig:mc_redundant_small}
\begin{center}
\begin{tabular}{c}
\includegraphics[width=5in]{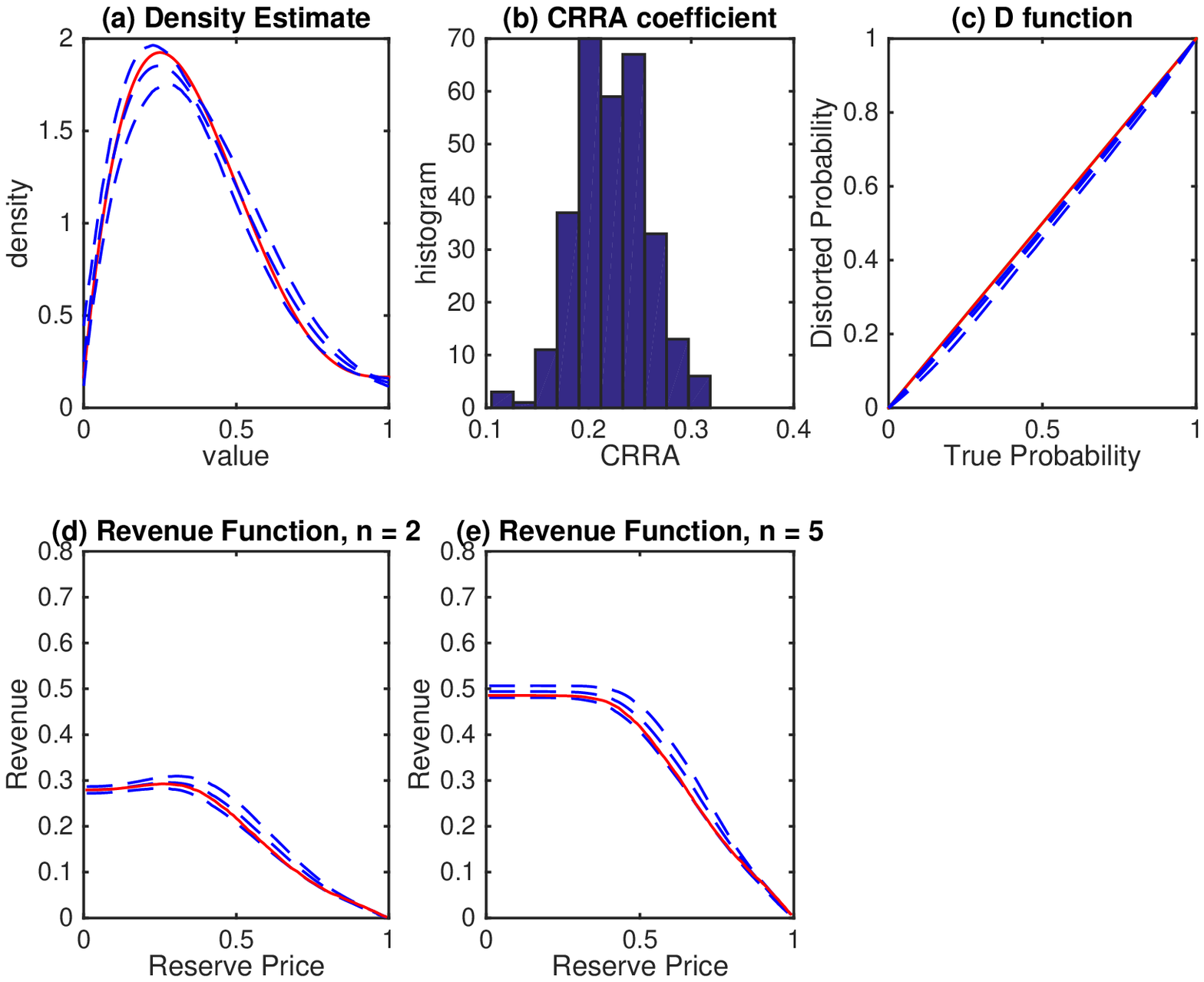}
\end{tabular}
\caption*{\footnotesize 
Panel (a) shows the sampling distribution of the estimated valuation densities by its pointwise mean and a 95\% frequency band. Panel (c) is the histogram of the CRRA estimates. 
Panel (c) demonstrates the sampling distribution of the estimated $D$ functions. 
Panels (d) and (e) are for the estimated revenue functions with alternative numbers of bidders.
The solid lines represent the true quantities.} 
\end{center}
\end{figure}

Now, we consider the repeated sampling, which generates estimates $\{\widehat{f}_m,\widehat{D}_m,\widehat{\theta}_m^u\}_{m=1}^M$ and the Bayes actions and associated true revenues 
$\{\rho_{n,m}^B,\Pi_n^0(\rho_{n,m}^B)\}_{m=1}^M$ for $n\in\{2,5\}$.
Figure \ref{fig:mc_redundant_small} summarizes the sampling distribution of the estimates of interest. 
The distributions of $\{\widehat{f}_m\}_{m=1}^M$,  $\{\widehat{D}_m\}_{m=1}^M$, and  $\{\widehat{\Pi}_{n,m}\}_{m=1}^M$ closely approximate the true quantities (accurate) and their 95\% frequency bands are all narrow (precise). 
Table \ref{table:mc01} documents that the mean integrated squared error (MISE) of $\widehat{f}$ is 0.0049 and the MISE of $\widehat{D}$ is 0.0004, which also shows the high accuracy of our method.
Panel (b) shows the histogram of $\{\widehat{\theta}_m^u\}_{m=1}^M$ -- 
the estimate is slightly underestimated, but 
Table \ref{table:mc01} documents that the accuracy measured by MSE is $0.007$, which is the same as the correct model. 
Moreover, the Bayes action $\rho_n^B$ generates essentially optimal revenues.
Finally, we find that the estimates get more accurate and the revenue loss decreases as the sample size grows; see Table \ref{table:mc01}. 

In summary: even if bidders are not ambiguity averse, the redundant modeling of $D$-function would neither lower accuracy/precision of the estimates nor invalidate policy recommendations.

\subsection{Misspecified Model}
We generate datasets $\{z_m\}_{m=1}^{M}$ independently from the DGP shown in Figure \ref{fig:mc_dgp} with the $D^0(\gamma)$ on panel (b), i.e., bidders are ambiguity averse. 
So, the DGP is the same as the one in subsection \ref{correct}, but we assume that the econometrician ignores the presence of ambiguity. 
That is, for each $z_m$, we apply our method constraining $D$ to be the identity 
to investigate the effect of such misspecification on estimates and policy implications. 

We examine the posterior analysis first for the first data set $z_1$.
The prior predictive analysis on the summary statistics of $z_1$ is almost identical to the results as Figure \ref{fig:prior_pred} in the sense that the data can be regarded as a typical realization under the prior, which is diffuse. 
However, the posterior distribution of the summary statistics, especially for $n=2$,
does not predict the data; see Figure \ref{fig:post_pred_mis}, which suggests that econometrician may need to improve the specification or revise the model. 
In addition, Figure \ref{fig:estimate01_mis} shows that 
the posterior credible band for the valuation density does not include $f^0$ over a large portion of the support and the support of the posterior of $\theta^u$ does not contain true $\theta_0^u$, i.e.,
the estimates are inaccurate. 
\begin{figure}[t!]
\caption{\footnotesize Posterior Predictive Analysis of Misspecified Model}\label{fig:post_pred_mis}
\begin{center}
\begin{tabular}{c}
\includegraphics[width=5in]{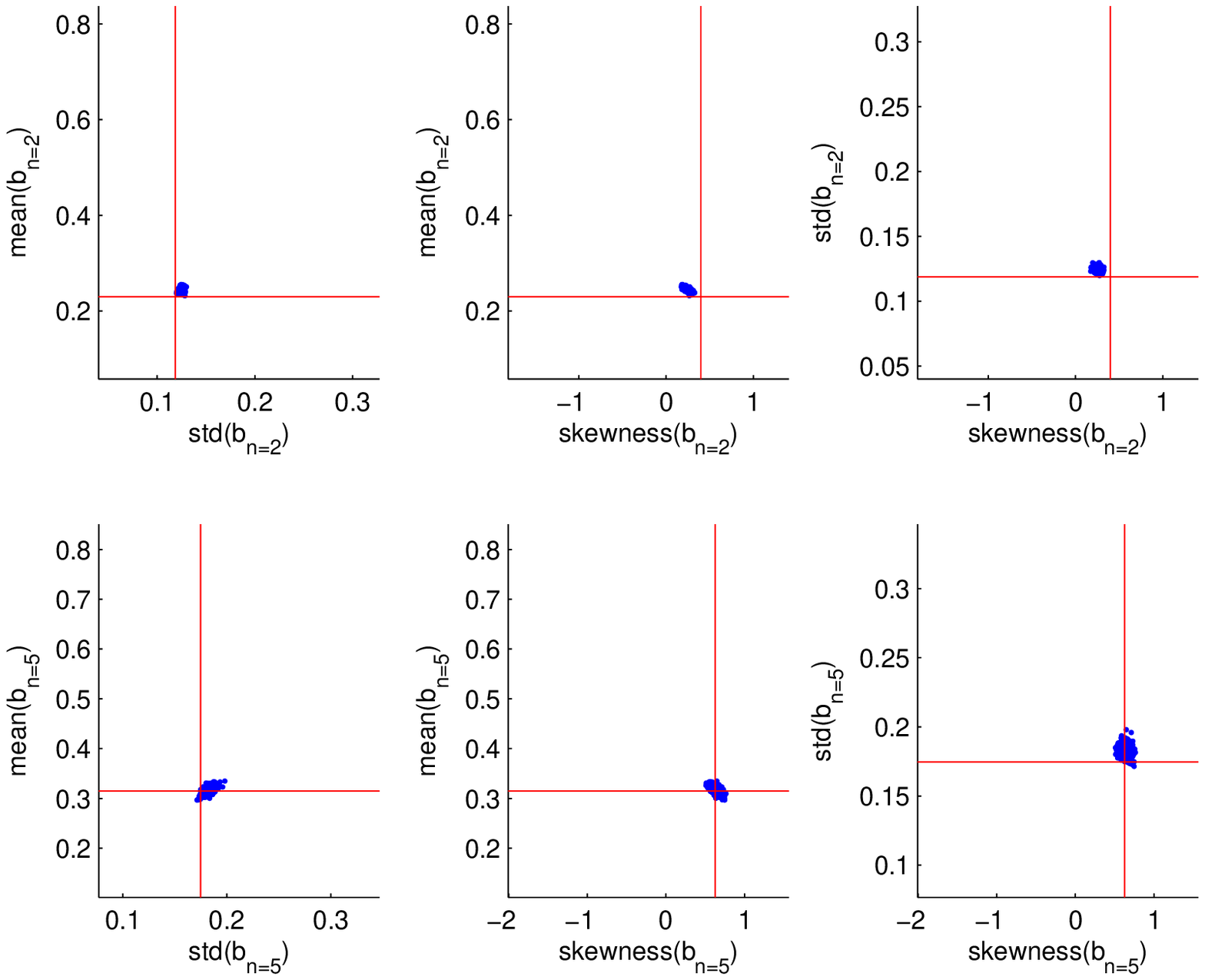}
\end{tabular}
\caption*{\footnotesize Each panel demonstrate the distribution of the summary statistics by dots of the bid data under the posterior along with the summary statistics of the original data in solid lines. }
\end{center}
\end{figure}
\begin{figure}[t!]
\caption{\footnotesize Posterior of Misspecified Model}\label{fig:estimate01_mis}
\begin{center}
\begin{tabular}{c}
\includegraphics[width=5in]{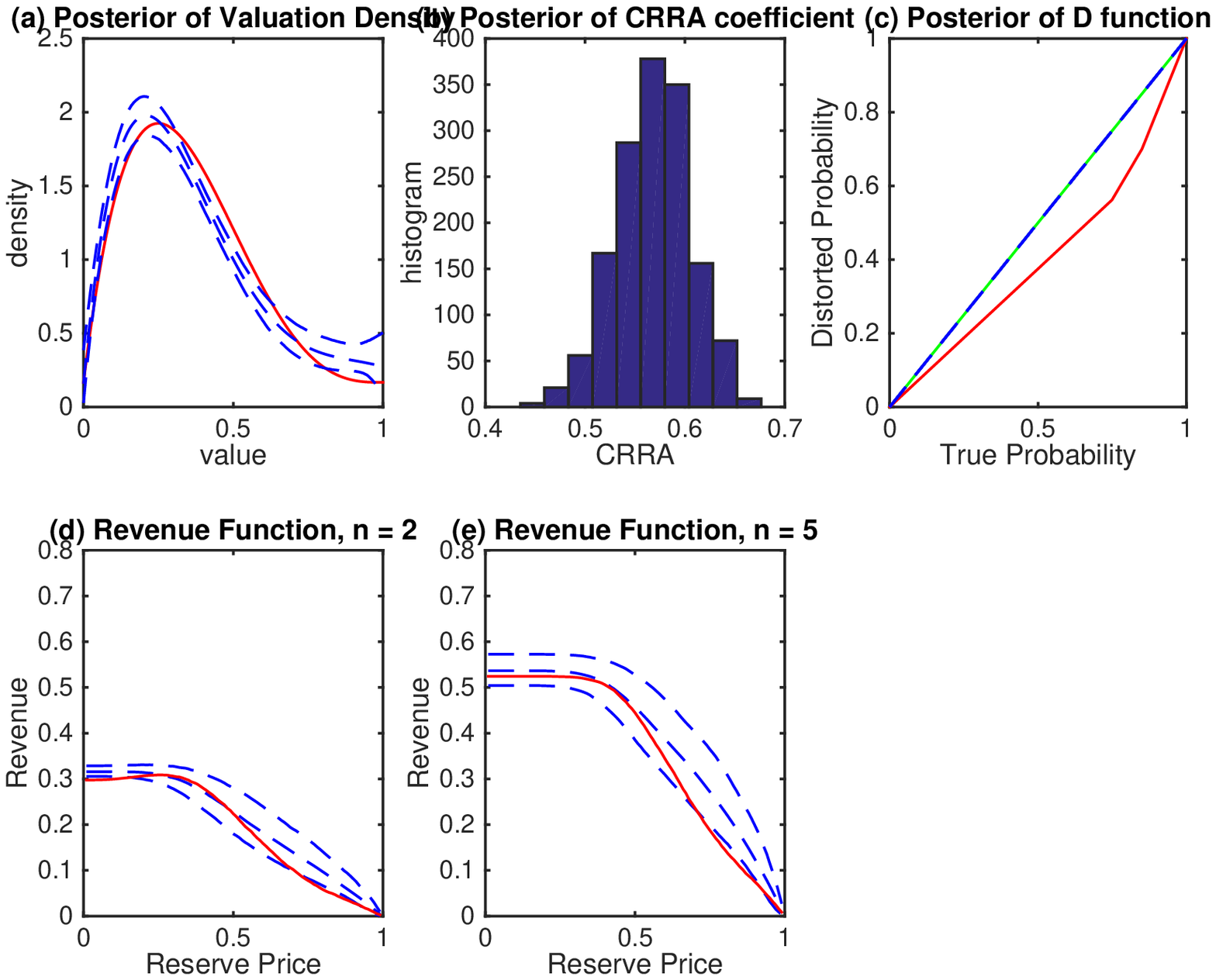}
\end{tabular}
\caption*{\footnotesize 
Panel (a) shows the posterior of the valuation density by its pointwise mean and a 95\% credible band. Panel (b) is the posterior of the CRRA coefficients. Panel (c) summarizes the posterior of the $D$-function. Panels (d) and (e) show the posterior of revenue functions for $n=2$ and $n=5$ cases. On panels (a), (c), (d), and (e), the true quantities are the solid line. (Panel (c) shows the identity.) 
}
\end{center}
\end{figure}
\begin{figure}[t!]
\caption{\footnotesize Monte Carlo Study for Misspecified Model}\label{fig:mc_mis_small}
\begin{center}
\begin{tabular}{c}
\includegraphics[width=5in]{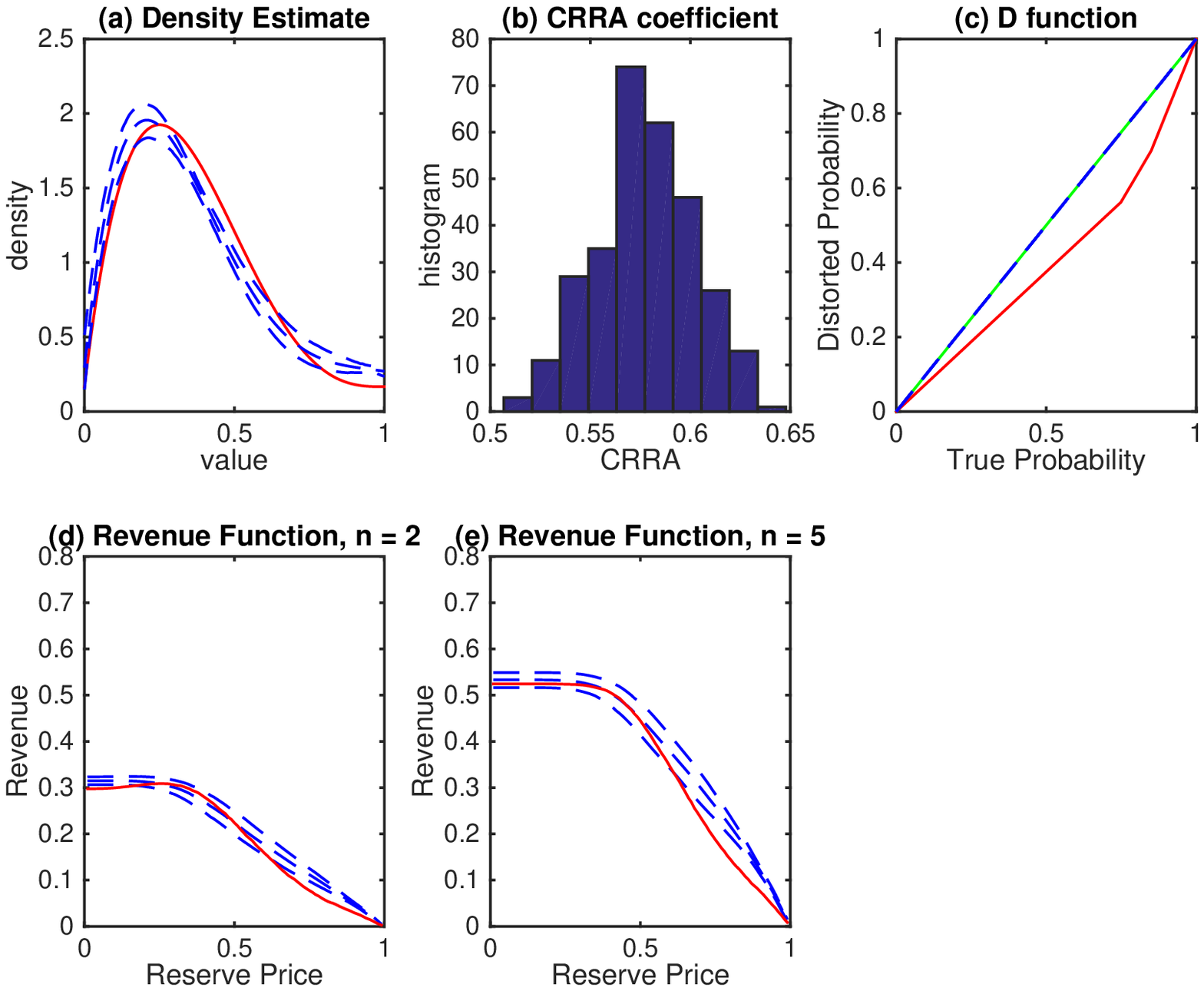}
\end{tabular}
\caption*{\footnotesize 
Panel (a) shows the sampling distribution of the estimated valuation densities by its pointwise mean and a 95\% frequency band. Panel (c) is the histogram of the CRRA estimates. 
Panel (c) demonstrates the sampling distribution of the estimated $D$ functions. 
Panels (d) and (e) are for the estimated revenue functions with alternative numbers of bidders.
The solid lines represent the true quantities.} 
\end{center}
\end{figure}

The failure of modeling ambiguity aversion can invalidate the policy recommendation. 
Table \ref{table:poster_rev} shows that $\rho_{n=2}^B = 0.12$ which predicts the revenue of $\widehat{\Pi}_{n=2}(\rho_{n=2}^B) = 0.316$ with the 95\% credible interval of $[0.308,0.324]$. 
But, this credible interval does not contain the true revenue $\Pi_{n=2}^0(\rho_{n=2}^B) = 0.301$ and, thereby, the revenue prediction is not accurate. 
Furthermore, the revenue loss of using $\rho_{n=2}^B$ under the misspecification relative to the largest revenue is approximately 2.65\%.
This revenue loss can also be regarded as the revenue loss relative to the correct model 
because the latter produces essentially the true maximum revenue. 

Now, we investigate the sampling distribution of the estimates $\{\widehat{f}_m,\widehat{D}_m,\widehat{\theta}_m^u\}_{m=1}^M$ and the Bayes actions and associated true revenues 
$\{\rho_{n,m}^B,\Pi_n^0(\rho_{n,m}^B)\}_{m=1}^M$ for $n\in\{2,5\}$.
Figure \ref{fig:mc_mis_small} summarizes the sampling distribution of the estimates of interest. 
The distributions of $\{\widehat{f}_m\}_{m=1}^M$ does not approximate the true $f^0$ and the CRRA coefficients are so overestimated that the true $\theta_0^u$ is not in the support of the histogram. 
Table \ref{table:mc01} documents that the MISE of $\widehat{f}$ 
is 0.0214, which is 2.57 times larger than the MISE of $\widehat{f}$ for the correctly specified case, and the MSE of $\widehat{\theta}^u$ is ten times larger. 
Moreover, the revenue loss of $\rho_n^B$ under the misspecification is about 2.9\% relative to the true optimal revenue $\Pi_n^0(\rho_n^0)$.
Finally, we find that the estimates does not get more accurate (MISE of $\widehat{f}$) as the sample size grows and the revenue loss does not disappear. 

Therefore, when the empirical analysis does not take into account the ambiguity aversion, 
the estimates can be inaccurate and the policy recommendations can be invalid, 
unlike the case where the ambiguity is redundantly modeled when there is no ambiguity.

\subsection{Rich variation in $n$}
Until now, we have considered $N_1=\{2,5\}$, i.e., we observe auctions with two bidders and auctions with five bidders. Here, we examine the empirical environment in which there is a richer variation in the number of bidders -- 
we consider $N_2:=\{2,4,5\}$ and then $N_3:=\{2,3,4,5,6\}$. 
For both $N_2$ and $N_3$, as before, we study the cases that we observe 600 bids in total, 1,200 bids, and 2,400 bids with each $n\in N_j$ bidder auction 
equally sharing the bids. 
For example, when we observe 1,200 bids for $N_2$, then we observe 400 bids for each of $n\in\{2,4,5\}$ bidder auctions, i.e. we observe $200$ two bidder auctions, $100$ four bidder auctions, and $80$ five bidder auctions.  
Since we consider two $N$'s and three sample sizes $\{600\textrm{ bids}, 1200\textrm{ bids}, 2400\textrm{ bids}\}$, we have six pairs of $N$ and sample sizes, for each of which we consider Correct model, Redundant model, and Misspecified model. 
We run 18 experiments in this subsection in addition to the 9 experiments with $N_1$ in the previous subsections.
\begin{table}[t!]
\caption{\footnotesize Monte Carlo Study, $N=\{2,4,5\}$}\label{table:mc02}
\begin{center}
{\footnotesize
\begin{tabular}{lr|cccc}
\hline\hline
 & Total N. & $MISE(\widehat{f})$     &$MISE(\widehat{D})$  & $MSE(\widehat{\theta}^u)$ & Rev. Loss  (\%) \\
Specification & of bids  & (A)                   &  (B)                 & (C)              & $n=2$ (D) \\
\hline
Correct & 600  & 0.0075  & 0.0011  & 0.007  & 0.432 \\
 & 1200  & 0.0046  & 0.0007  & 0.005  & 0.315 \\
 & 2400  & 0.0033  & 0.0004  & 0.004  & 0.122 \\
&&&&&\\
Redundant & 600  & 0.0043  & 0.0004  & 0.006  & 0.148 \\
 & 1200  & 0.0024  & 0.0004  & 0.005  & 0.076 \\
 & 2400  & 0.0012  & 0.0003  & 0.003  & 0.045 \\
&&&&&\\
Misspecified & 600  & 0.0197  & 0.0128  & 0.078  & 2.900 \\
 & 1200  & 0.0208  & 0.0128  & 0.082  & 3.103 \\
 & 2400  & 0.0228  & 0.0128  & 0.084  & 3.261 \\
	\hline
\end{tabular}
}
\end{center}
\caption*{\footnotesize 
Columns (A) and (B) documents the MISEs of the valuation density estimate and the $D$ function estimate, respectively. 
Column (C) shows the MSE of the estimate for the CRRA  coefficient.
Column (D) provides the revenue loss of the Bayes auction relative to the true maximum revenue.
}
\end{table}

Tables \ref{table:mc02} and \ref{table:mc03} document the results of the Monte Carlo study with $N_2$ and $N_3$, respectively. 
In both cases, we observe the same pattern as we do in the case of $N_1$. 
The correct model and redundant model generate accurate estimates on the model primitives and the Bayes action on reserve price produces essentially maximum revenues. In addition, as sample size grows, the method becomes more accurate and precise, and the revenue loss decreases. 
On the other hand, the misspecified model that ignores the ambiguity results in far less accurate estimates on the model primitives and the revenue loss of about 3\%.

\begin{table}[t!]
\caption{\footnotesize Monte Carlo Study, $N=\{2,3,4,5,6\}$}\label{table:mc03}
\begin{center}
{\footnotesize
\begin{tabular}{lr|cccc}
\hline\hline
 & Total N. & $MISE(\widehat{f})$     &$MISE(\widehat{D})$  & $MSE(\widehat{\theta}^u)$ & Rev. Loss  (\%) \\
Specification & of bids  & (A)                   &  (B)                 & (C)              & $n=2$ (D) \\
\hline
Correct & 600  & 0.0075  & 0.0013  & 0.009  & 0.596 \\
 & 1200  & 0.0042  & 0.0006  & 0.004  & 0.260 \\
 & 2400  & 0.0030  & 0.0004  & 0.003  & 0.084 \\
&&&&&\\
Redundant & 600  & 0.0040  & 0.0004  & 0.005  & 0.139 \\
 & 1200  & 0.0020  & 0.0004  & 0.004  & 0.058 \\
 & 2400  & 0.0010  & 0.0004  & 0.003  & 0.034 \\
&&&&&\\
Misspecified & 600  & 0.0180  & 0.0128  & 0.084  & 3.024 \\
 & 1200  & 0.0185  & 0.0128  & 0.091  & 3.299 \\
 & 2400  & 0.0208  & 0.0128  & 0.096  & 3.483 \\
	\hline
\end{tabular}
}
\end{center}
\caption*{\footnotesize 
Columns (A) and (B) documents the MISEs of the valuation density estimate and the $D$ function estimate, respectively. 
Column (C) shows the MSE of the estimate for the CRRA  coefficient.
Column (D) provides the revenue loss of the Bayes auction relative to the true maximum revenue.
}
\end{table}

 \section{Conclusion}\label{section:conclusion}
We study first-price auction models where 
risk averse bidders have ambiguity about the valuation distribution. 
In an environment where bidders consider multiple distributions as equally reasonable and their preferences 
are
represented by 
the 
maxmin expected utility, 
we characterize 
a symmetric and monotonic equilibrium (bidding) strategy. 
We  
show that exogenous entry of bidders is 
sufficient  
to identify model structure (true valuation distribution, the D-function that measures the level of ambiguity and the risk aversion (CRRA) coefficient). 
To 
decide  
whether there is ambiguity in the data it is enough to check if the $D$-function is strictly below an identity function. 

Then we propose 
a 
flexible Bayesian estimation method that uses Bernstein polynomials.
Since the main objective of empirical auction is to use data to design optimal auctions, 
we consider a multitude of simulation exercises to asses the performance of our method and to analyze the importance of ambiguity for the seller.
We show that our method detects ambiguity correctly when there is ambiguity,
and when there is no ambiguity yet we allow for ambiguity, there is no discernible loss to the seller from using our method. 
On the other hand, if there is ambiguity and we ignore it, we show the estimates are biased and as a result the seller can lose substantial (3\% in our exercises) of revenue. 
These exercises suggest that in empirical auction it is always better to allow for ambiguity, unless the econometrician is absolutely certain that there is no ambiguity among bidders.

We conclude by pointing out few avenues to explore for extension. 
First, one could consider the possibility that entry is endogenous. 
With appropriate exclusion restriction, as in \cite{Bajari_Hortacsu_2003,Haile_Hong_Shum_2006, Krasnokutskaya_Seim_2011}, the model can still be identified.
Second, we can consider dynamic auctions with learning where bidders begin with an exogenously specified set of distributions and update their beliefs after every auction.   
It is well-known that the MEU model need not be dynamically consistent with full Bayesian updating, see \cite{Hanany_Klibanoff_2007} and \cite{Epstein_Schneider_2003,Epstein_Schneider_2007}. \cite{Aryal_Stauber_2014b} showed
that the method proposed by \cite{Epstein_Schneider_2003} to address dynamic inconsistency cannot be extended to games with multiple players. 
So it is not even clear how we can characterize equilibrium strategies. 
Moreover, if bidders have incentive to learn then the seller might have incentive to obfuscate by withholding the bids, simultaneously leading to the problem of determining optimal disclosure rule, \cite{Bergemann_Wambach_2013} and the informed principal problem \cite{Myerson_1983, Maskin_Tirole_1990}.

\bibliographystyle{econometrica}
\addcontentsline{toc}{chapter}{Bibliography}

\bibliography{../../bibliography/bibliography}

\clearpage

\clearpage

\end{document}